\def\wcli{\textsc{Maximum Weighted Clique}\xspace}
\def\cli{\textsc{Maximum Clique}\xspace}
\def\mis{\textsc{Maximum Independent Set}\xspace}
\def\mwis{\textsc{Maximum Weighted Independent Set}\xspace}
\tikzstyle{vp}=[circle,fill,inner sep=0pt, minimum size=0.1cm]
\tikzstyle{vps}=[circle,fill,inner sep=0pt, minimum size=0.065cm]
\def\centerarc[#1](#2)(#3:#4:#5)
\def\seg{\text{seg}}
\DeclareMathOperator{\ocp}{ocp}
\title{QPTAS and Subexponential Algorithm for Maximum Clique on Disk Graphs\footnote{Research partially supported by EPSRC grant FptGeom (EP/N029143/1) and ANR grant ESIGMA (ANR-17-CE40-0028)}}
\author[1]{\'Edouard Bonnet}
\author[1]{Panos Giannopoulos}
\author[2]{Eun Jung Kim}
\author[3]{Pawe\l{} Rz\k{a}\.zewski}
\author[2]{Florian Sikora}
\affil[1]{Department of Computer Science, Middlesex University, London
  \texttt{edouard.bonnet@dauphine.fr}, \texttt{p.giannopoulos@mdx.ac.uk}}
\affil[2]{Universit\'{e} Paris-Dauphine, PSL Research University, CNRS UMR, LAMSADE, Paris, France\\
\texttt{\{eun-jung.kim,florian.sikora\}@dauphine.fr}}
\affil[3]{Faculty of Mathematics and Information Science,\\ 
  Warsaw University of Technology \texttt{p.rzazewski@mini.pw.edu.pl}}
\authorrunning{\'E. Bonnet and P. Giannopoulos and E.~J.~Kim and P. Rz\k{a}\.zewski and F. Sikora}
\subjclass{G.2.2 Graph Theory, F.2.2 Nonnumerical Algorithms and Problems}
\keywords{disk graph, maximum clique, computational complexity}
\begin{document}
 
\maketitle

\begin{abstract}
  A (unit) disk graph is the intersection graph of closed (unit) disks in the plane.
  Almost three decades ago, an elegant polynomial-time algorithm was found for \textsc{Maximum Clique} on unit disk graphs [Clark, Colbourn, Johnson; Discrete Mathematics '90].
  Since then, it has been an intriguing open question whether or not tractability can be extended to general disk graphs.
  We show the rather surprising structural result that a disjoint union of cycles is the complement of a disk graph if and only if at most one of those cycles is of odd length.
  From that, we derive the first QPTAS and subexponential algorithm running in time $2^{\tilde{O}(n^{2/3})}$ for \textsc{Maximum Clique} on disk graphs.
  In stark contrast, \textsc{Maximum Clique} on intersection graphs of filled ellipses or filled triangles is unlikely to have such algorithms, even when the ellipses are close to unit disks.
  Indeed, we show that there is a constant ratio of approximation which cannot be attained even in time $2^{n^{1-\varepsilon}}$, unless the Exponential Time Hypothesis fails.
\end{abstract}

\section{Introduction}
An \emph{intersection graph} of geometric objects has one vertex per object and an edge between every pair of vertices corresponding to intersecting objects. 
Intersection graphs for many different families of geometric objects have been studied due to their practical applications and rich structural properties~\cite{McKee1999, Brandstadt1999}.
Among the most studied ones are \emph{disk graphs}, which are intersection graphs of closed disks in the plane, and their special case, \emph{unit disk graphs}, where all the radii are the same. Their applications range from sensor networks to map labeling~\cite{DBLP:conf/waoa/Fishkin03}, and many standard optimization problems have been studied on disk graphs, see for example~\cite{EJvL2009} and references therein.
In this paper, we study \textsc{Maximum Clique} on general disk graphs. 

\paragraph*{Known results.}
Recognizing unit disk graphs is NP-hard \cite{Breu98}, and even $\exists \mathbb{R}$-complete~\cite{Kang12}.
Clark et al.~\cite{Clark90} gave a polynomial-time algorithm for \cli on unit disk graphs with a geometric representation. 
The core idea of their algorithm can actually be adapted so that the geometric representation is no longer needed \cite{Raghavan03}. 
The complexity of the problem on general disk graphs is unfortunately still unknown. 
Using the fact that the transversal number for disks is $4$, Amb\"uhl and Wagner~\cite{Ambuhl05} gave a simple $2$-approximation algorithm for \cli on general disk graphs. 
They also showed the problem to be APX-hard on intersection graphs of ellipses and gave a $9\rho^2$-approximation algorithm for filled ellipses of aspect ratio at most $\rho$. 
Since then, the problem has proved to be elusive with no new positive or negative results.
The question on the complexity and further approximability of \cli on general disk graphs is considered as folklore~\cite{bang2006}, but was also explicitly mentioned as an open problem by Fishkin~\cite{DBLP:conf/waoa/Fishkin03}, Amb\"uhl and Wagner~\cite{Ambuhl05} and Cabello~\cite{CabelloOpen,Cabello2015}.

A closely related problem is \textsc{Maximum Independent Set}, which is known to be W[1]-hard (even on unit disk graphs \cite{Marx08}) and to admit a subexponential exact algorithm~\cite{AlberF04} and PTAS~\cite{Erlebach2005,Chan2003} on disk graphs.

\paragraph*{Results and organization.}
In Section~\ref{sec:structural}, we mainly prove that the disjoint union of two odd cycles is not the complement of a disk graph.
To the best of our knowledge, this is the first structural property that general disk graphs do not inherit from strings or from convex objects.
We provide an infinite family of forbidden induced subgraphs, an analogue to the recent work of Atminas and Zamaraev on unit disk graphs~\cite{Atminas16}.
In Section~\ref{sec:algorithms}, we show how to use this structural result to approximate and solve \mis on complements of disk graphs, hence \cli on disk graphs.
More precisely, we present the first quasi-polynomial-time approximation scheme (QPTAS) and subexponential-time algorithm for \cli on disk graphs, even without the geometric representation of the graph.
In Section~\ref{sec:gen&lim}, we highlight how those algorithms contrast with the situation for ellipses or triangles, where there is a constant $\alpha>1$ for which an $\alpha$-approximation running in subexponential time is highly unlikely (in particular, ruling out at once QPTAS \emph{and} subexponential-time algorithm).
We conclude in Section~\ref{sec:perspectives} with a few open questions.

\paragraph*{Definitions and notations.}
For two integers $i \leqslant j$, we denote by $[i,j]$ the set of integers $\{i,i+1,\ldots, j-1, j\}$.
For a positive integer $i$, we denote by $[i]$ the set of integers $[1,i]$.
If $S$ is a subset of vertices of a graph, we denote by $N(S)$ the open neighborhood of $S$ and by $N[S]$ the set $N(S) \cup S$.
The \emph{2-subdivision} of a graph $G$ is the graph $H$ obtained by subdividing each edge of $G$ exactly twice.
If $G$ has $n$ vertices and $m$ edges, then $H$ has $n+2m$ vertices and $3m$ edges.
The \emph{co-2-subdivision} of $G$ is the complement of $H$.
Hence it has $n+2m$ vertices and ${n+2m \choose 2} - 3m$ edges.
The \emph{co-degree} of a graph is the maximum degree of its complement.
A \emph{co-disk} is a graph that is the complement of a disk graph.

For two distinct points $x$ and $y$ in the plane, we denote by $\ell(x,y)$ the unique line going through $x$ and $y$, and by $\seg(x,y)$ the closed straight-line segment whose endpoints are $x$ and $y$.
If $s$ is a segment with positive length, then we denote by $\ell(s)$ the unique line containing $s$.
We denote by $d(x,y)$ the euclidean distance between points $x$ and $y$.
We will often define disks and elliptical disks by their boundary, i.e., circles and ellipses, and also use the following basic facts.
There are exactly two circles that pass through a given point with a given tangent at this point and a given radius; one if we further specify on which side of the tangent the circle is.
There is exactly one circle which passes through two points with a given tangent at one of the two points, provided the other point is \emph{not} on this tangent.
Finally, there exists one (not necessarily unique) ellipse which passes through two given points with two given tangents at those points. 

The \emph{Exponential Time Hypothesis} (ETH) is a conjecture by Impagliazzo et al. asserting that there is no $2^{o(n)}$-time algorithm for \textsc{3-SAT} on instances with $n$ variables \cite{ImpagliazzoETH}. 
The ETH, together with the sparsification lemma \cite{ImpagliazzoETH}, even implies that there is no $2^{o(n+m)}$-time algorithm solving \textsc{3-SAT}.

\section{Disk graphs with co-degree 2}\label{sec:structural}

In this section, we fully characterize the degree-2 complements of disk graphs.
We show the following:

\begin{theorem}\label{thm:main-structural}
  A disjoint union of paths and cycles is the complement of a disk graph if and only if the number of odd cycles is at most one. 
\end{theorem}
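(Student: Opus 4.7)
My plan is to prove the biconditional by handling the two directions separately. The forward (``if'') direction requires a constructive realization, while the backward (``only if'') direction is an impossibility argument that carries the substantive geometric content.

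For the forward direction, assume $G$ is a disjoint union of paths and cycles with at most one odd cycle. I would build a disk realization of $\overline{G}$ component by component. For each path and each even cycle, a symmetric ``alternating'' disk placement (e.g., disks whose centers zig-zag along a curve, or lie on two interleaved small circles, with radii tuned so that exactly the path/cycle-adjacent pairs are disjoint) realizes the complement of that component. The possibly-unique odd cycle $C_{2k+1}$ needs a slightly asymmetric placement; since only one odd cycle must be handled, the parity obstruction (ruled out in the other direction when two are present) does not interfere. Finally, I would compose the per-component realizations---for instance by shrinking each component into a small region and inflating each disk with a shared large background---so that every disk from one component intersects every disk of every other component, as required since all cross-component pairs are edges of $\overline{G}$.

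For the backward direction, suppose $G$ contains two vertex-disjoint odd cycles $C$ and $C'$. Since disk graphs are closed under taking induced subgraphs, it suffices to derive a contradiction from any disk realization of $\overline{C \cup C'}$. The smallest case $C = C' = C_3$ is the statement that $K_{3,3}$ is not a disk graph: given pairwise disjoint disks $A_1, A_2, A_3$ and pairwise disjoint disks $B_1, B_2, B_3$ with every $A_i \cap B_j \neq \emptyset$, each $B_j$ must be ``large enough'' to span all three $A_i$'s, and a Helly-type packing argument shows three such $B_j$'s cannot remain pairwise disjoint. Extending this to arbitrary odd lengths is more subtle, because $K_{3,3}$ is not in general an induced subgraph of $\overline{C \cup C'}$. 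My approach is to extract from any disk realization of the complement of an odd cycle a parity-sensitive geometric invariant---for instance, a canonical cyclic angular order of the disks around a center, or a rotation/winding count on the separating lines between cycle-adjacent pairs---and to use the oddness of the cycle to force a ``half-twist''; combining two such twists with the required all-cross intersections between the disks of $C$ and those of $C'$ should then yield a contradiction. The main obstacle is precisely this step: identifying the right parity-sensitive invariant and verifying that every candidate disk realization violates it uniformly over all pairs of odd cycle lengths is where the bulk of the geometric work lies.
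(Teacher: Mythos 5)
The decisive gap is in the backward direction, and you flag it yourself: you never identify the parity-sensitive invariant, and that invariant is the entire content of the hard half of the theorem. The paper's route is to first prove a $K_{2,2}$ lemma: in any disk representation of an induced $K_{2,2}$ (taken proper and with no three centers aligned), either the line through the two centers of one non-edge meets the segment joining the centers of the other non-edge, or vice versa; this follows from a convex-position/triangle-inequality argument. Because every vertex of the first odd cycle is adjacent in the complement to every vertex of the second, each pair consisting of a non-edge segment $S_i$ of the first cycle and a non-edge segment $S'_j$ of the second induces a $K_{2,2}$, so the crossing condition must hold for \emph{all} pairs $(i,j)$. Setting $a_i$ (segments of the second polygonal cycle crossed by the line prolonging $S_i$), $b_i$ (segments $S'_j$ whose prolonging line crosses $S_i$) and $c_i$ (segments crossing $S_i$), one gets $a_i+b_i-c_i=2t+1$ for every $i$, so $\sum_i (a_i+b_i-c_i)=(2s+1)(2t+1)$ is odd; but each $a_i$ is even (a line crosses a closed curve an even number of times), $\sum_i c_i$ is even (two closed curves cross an even number of times), and $\sum_i b_i=\sum_j a'_j$ is even. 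Your ``half-twist/winding'' idea points in the right direction, but without something like the $K_{2,2}$ crossing lemma to convert non-adjacency in the graph into a line/segment incidence that can be counted modulo $2$, there is no proof. Also, your Helly-type argument for the base case cannot be the engine of the general proof, since, as you note, $K_{3,3}$ is not an induced subgraph of $\overline{C\cup C'}$ for longer odd cycles.

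The forward direction is likewise only gestured at exactly where it is constrained. ``Shrink each component into a small region and inflate each disk with a shared large background'' is not a construction: inflating disks so that all cross-component pairs intersect threatens the within-component non-edges (which require disjoint disks), and no generic composition scheme can exist, because it would equally well compose two odd components, contradicting the other direction. The paper's construction is correspondingly delicate: each complement of an even cycle is realized by equal large disks that are pairwise intersecting except for near-tangent consecutive pairs clustered near one point; the several representations are then stacked by small rotations about that common near-tangency point; and the single odd cycle is added at a much smaller scale and with a large rotation angle, the paper's ``sanity check'' showing precisely why the same stacking fails for two odd components. Paths then come for free, since a path is an induced subgraph of a (longer, even) cycle and disk graphs are closed under induced subgraphs. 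As written, both halves of your argument defer the essential work, so the proposal does not yet constitute a proof.
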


We split this theorem into two parts.
In the first one, Section~\ref{subsec:notco-disk}, we show that the union of two disjoint odd cycles is not the complement of a disk graph.
This is the part that will be algorithmically useful.
As disk graphs are closed under taking induced subgraphs, it implies that in the complement of a disk graph two vertex-disjoint odd cycles have to be linked by at least one edge.
This will turn out useful when solving \mis on the complement of the graph (to solve \cli on the original graph).
In the second part, Section~\ref{subsec:co-disk}, we show how to represent the complement of the disjoint union of even cycles and exactly one odd cycle.
Although this result is not needed for the forthcoming algorithmic section, it nicely highlights the singular role that parity plays and exposes the complete set of disk graphs of co-degree 2. 

\subsection{The disjoint union of two odd cycles is not co-disk}
\label{subsec:notco-disk}

We call \emph{positive distance} between two non-intersecting disks the minimum of $d(x,y)$ where $x$ is in one disk and $y$ is in the other.
If the disks are centered at $c_1$ and $c_2$ with radius $r_1$ and $r_2$, respectively, then this value is $d(c_1,c_2)-r_1-r_2$.
We call \emph{negative distance} between two intersecting disks the length of the straight-line segment defined as the intersection of three objects: the two disks and the line joining their center.
This value is $r_1+r_2-d(c_1,c_2)$, which is positive.

We call \emph{proper representation} a disk representation where every edge is witnessed by a proper intersection of the two corresponding disks, i.e., the interiors of the two disks intersect.
It is easy to transform a disk representation into a proper representation (of the same graph). 

\begin{lemma}\label{lem:proper}
  If a graph has a disk representation, then it has a proper representation.
\end{lemma}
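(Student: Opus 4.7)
The plan is to start from a given disk representation and slightly enlarge every disk by the same small amount $\delta>0$. Enlarging preserves intersections (so no edges are lost) and turns every tangent contact into a proper overlap, while a sufficiently small $\delta$ ensures that disjoint pairs remain disjoint (so no new edges are created).

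More precisely, fix a disk representation of $G$ with disks $D_1,\ldots,D_n$ centered at $c_i$ with radius $r_i$. Let
\[
  \varepsilon_0 \;=\; \min\bigl\{\, d(c_i,c_j)-r_i-r_j \;:\; ij \notin E(G)\,\bigr\},
\]
taking $\varepsilon_0 = +\infty$ if $G$ is complete. Since every non-edge corresponds to a pair of disjoint disks, each term of the minimum is strictly positive, and the minimum is taken over finitely many terms, so $\varepsilon_0 > 0$. Choose any $\delta$ with $0 < \delta < \varepsilon_0/2$, and define the new representation by $D_i' = $ disk of center $c_i$ and radius $r_i + \delta$.

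I would then verify three things. First, for every non-edge $ij$ of $G$ the new positive distance is $d(c_i,c_j)-(r_i+\delta)-(r_j+\delta) \geq \varepsilon_0 - 2\delta > 0$, so $D_i'$ and $D_j'$ remain disjoint. Second, for every edge $ij$ such that $D_i$ and $D_j$ are tangent (i.e.\ $d(c_i,c_j)=r_i+r_j$), the new quantity $d(c_i,c_j)-(r_i+\delta)-(r_j+\delta)=-2\delta$ is strictly negative, which by the basic geometry recalled in the preliminaries means the two disks have overlapping interiors. Third, if $D_i$ and $D_j$ already overlap properly, then enlarging both only increases the overlap, so they still overlap properly. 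Hence $\{D_i'\}$ represents exactly the same graph $G$, and every edge is witnessed by a proper intersection.

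There is no real obstacle here; the only mild point is to observe that $\varepsilon_0$ is a finite minimum of strictly positive numbers and so is itself strictly positive, and to handle separately the trivial case where $G$ is complete (any $\delta > 0$ works). The argument makes no use of the planarity of the embedding beyond the elementary inequalities between centers and radii, so it extends transparently to any concrete disk arrangement realizing $G$.
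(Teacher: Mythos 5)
Your proposal is correct and is essentially the paper's argument: both perturb the radii by an amount strictly less than the smallest positive distance between disjoint disks, so that tangencies become proper overlaps while non-edges are preserved. The only cosmetic difference is that you enlarge every disk uniformly by $\delta<\varepsilon_0/2$, whereas the paper enlarges just one disk of each non-properly intersecting pair by $\varepsilon/2$; the verification is the same in either case.
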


\begin{proof}
  If two disks intersect non-properly, we increase the radius of one of them by $\varepsilon/2$ where $\varepsilon$ is the smallest positive distance between two disks.
\end{proof}

In order not to have to discuss about the corner case of three aligned centers in a disk representation, we show that such a configuration is never needed to represent a disk graph. 

\begin{lemma}\label{lem:generalPosition}
  If a graph has a disk representation, it has a proper representation where no three centers are aligned.
\end{lemma}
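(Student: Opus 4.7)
The plan is to start from a proper representation of the graph, which exists by Lemma~\ref{lem:proper}, and perturb each center by a tiny vector in general position so as to destroy every collinear triple, without disturbing any intersection, non-intersection, or the properness of any intersection.

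Let $\delta>0$ be strictly smaller than half of the minimum of all positive and all negative distances between pairs of disks in the proper representation; this minimum is strictly positive since there are only finitely many pairs. If each center is displaced by a vector of norm less than $\delta$, with the radii kept fixed, then by the triangle inequality the distance between any two centers changes by less than $2\delta$, so every originally positive distance remains strictly positive and every originally negative distance remains strictly positive. Hence the intersection graph, with properness of each edge, is preserved by any perturbation $(\vec v_1,\dots,\vec v_n)$ in the open set $B=\{(\vec v_1,\dots,\vec v_n)\in\mathbb{R}^{2n}:\|\vec v_i\|<\delta\text{ for all }i\}$.

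It remains to exhibit an element of $B$ for which no three perturbed centers are collinear. For each unordered triple $\{i,j,k\}$ of indices, the collinearity of the three perturbed centers is the vanishing of a single polynomial in the $2n$ perturbation coordinates, namely the $2\times 2$ determinant formed by the differences of the perturbed points. This polynomial is not identically zero (three points in the plane are not always collinear, regardless of the original positions of $c_i,c_j,c_k$), so its zero set has Lebesgue measure zero in $\mathbb{R}^{2n}$. Taking the union over the $\binom{n}{3}$ triples still has measure zero and cannot cover the open set $B$; any point of $B$ outside this union yields the desired representation. The only nontrivial step is this generic-position argument, and it is entirely standard; everything else is a routine verification. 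An equivalent, more hands-on alternative would be to process the centers one by one and, for each $c_i$, shift it by an arbitrarily small amount avoiding the finitely many lines already determined by pairs of previously placed centers.
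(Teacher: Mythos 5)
Your proof is correct and follows essentially the same approach as the paper: perturb the centers of a proper representation (obtained via Lemma~\ref{lem:proper}) by less than a fixed fraction of the minimum positive/negative distance, so that all intersections, non-intersections and properness are preserved, and then argue that a perturbation achieving general position exists. The paper carries out the genericity step iteratively---moving one center of an aligned triple off the finitely many lines spanned by pairs of centers, within a ball of radius $\varepsilon/2$---which is exactly the ``hands-on alternative'' you mention at the end, while your one-shot measure-zero argument is an equally valid formalization of the same idea.
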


\begin{proof}
  By Lemma~\ref{lem:proper}, we have or obtain a proper representation.
  Let $\varepsilon$ be the minimum between the smallest positive distance and the smallest negative distance.
  As the representation is proper, $\varepsilon > 0$.
  If three centers are aligned, we move one of them to any point which is not lying in a line defined by two centers in a ball of radius $\varepsilon/2$ centered at it.
  This decreases by at least one the number of triple of aligned centers, and can be repeated until no three centers are aligned.
\end{proof}

From now on, we assume that every disk representation is proper and without three aligned centers.
We show the folklore result that in a representation of a $K_{2,2}$ that sets the four centers in convex position, both non-edges have to be \emph{diagonal}.

\begin{lemma}\label{lem:cotwoKtwo}
  In a disk representation of $K_{2,2}$ with the four centers in convex position, the non-edges are between vertices corresponding to opposite centers in the quadrangle. 
\end{lemma}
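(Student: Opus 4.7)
The plan is to argue by contradiction using the structure of the convex hull. If the two non-edges of $K_{2,2}$ are not the two diagonals of the convex quadrangle, then they form a pair of opposite sides of the hull. Since the two possible opposite-side matchings are related by a cyclic relabelling of the vertices, I may assume the cyclic hull order is $c_1,c_2,c_3,c_4$ and the non-edges are $\{1,2\}$ and $\{3,4\}$, so that $D_1 \cap D_2 = D_3 \cap D_4 = \emptyset$ while the four remaining pairs intersect.

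The first step is a general segment-covering fact: whenever $D_i \cap D_j \neq \emptyset$, i.e.\ $d(c_i,c_j) \le r_i+r_j$, the closed segment $\seg(c_i,c_j)$ lies in $D_i \cup D_j$, because the first $r_i$ units from $c_i$ are in $D_i$, the last $r_j$ units from $c_j$ are in $D_j$, and these portions together cover the segment. Applying this to the two hull diagonals $\seg(c_1,c_3)$ and $\seg(c_2,c_4)$, both of which are edges under our hypothesis, their unique crossing point $p$ belongs to $(D_1 \cup D_3) \cap (D_2 \cup D_4)$. Expanding and using $D_1 \cap D_2 = D_3 \cap D_4 = \emptyset$, we obtain $p \in (D_1 \cap D_4) \cup (D_2 \cap D_3)$; by the remaining symmetry, I may assume $p \in D_1 \cap D_4$.

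This crossing-point observation alone is consistent with the configuration, so the contradiction must come from exploiting convex position. I would place coordinates with $c_1, c_2$ on the $x$-axis ($c_1$ to the left of $c_2$); the cyclic order $c_1,c_2,c_3,c_4$ then forces $c_3, c_4$ strictly above the $x$-axis. The intersection conditions $D_3 \cap D_i \neq \emptyset$ for $i = 1, 2$ confine $c_3$ to the closed lens $L_3 := B(c_1, r_1 + r_3) \cap B(c_2, r_2 + r_3)$, and similarly $c_4 \in L_4$. The plan is then to show, by a diameter-type argument in the upper half-plane, that every pair $(c_3, c_4) \in L_3 \times L_4$ compatible with the cyclic convex position $c_1, c_2, c_3, c_4$ satisfies $d(c_3, c_4) \le r_3 + r_4$, contradicting the non-edge $\{3, 4\}$.

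The main obstacle is carrying out this diameter bound in full generality. In the symmetric subcases it reduces to a short computation combining the lens geometry with the non-edge inequality $d(c_1, c_2) > r_1 + r_2$; but in certain asymmetric regimes the raw diameter of the upper half-lens can exceed $r_3 + r_4$, and one must invoke convex position explicitly. The delicate point is that the extremal placements maximising $d(c_3, c_4)$ inside the lenses turn out to force one of $c_3, c_4$ to lie inside the triangle spanned by the other three centers, hence violate convex position. A careful case analysis separating regimes by the ordering of the radii is then needed to recover the bound in every admissible configuration.
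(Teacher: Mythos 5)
Your reduction to the cyclic hull order $c_1,c_2,c_3,c_4$ with non-edges $\{1,2\}$ and $\{3,4\}$ is fine, and the segment-covering observation (the crossing point $p$ of the two diagonals lies in $(D_1\cap D_4)\cup(D_2\cap D_3)$) is correct but, as you yourself note, yields nothing. The entire contradiction is then delegated to the ``diameter-type argument'': the claim that every $c_3\in L_3$ and $c_4\in L_4$ in the upper half-plane, in convex position with $c_1,c_2$, satisfies $d(c_3,c_4)\leqslant r_3+r_4$. That claim is essentially a restatement of the lemma itself, and you do not prove it: you concede that the raw diameter of the half-lenses can exceed $r_3+r_4$ in asymmetric regimes, and you then appeal to an unspecified case analysis over radius orderings in which the extremal placements are asserted (not shown) to violate convex position. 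No case is carried out and no quantitative mechanism is given for how convex position enters, so the only essential step of the argument is missing; as written this is a plan, not a proof.

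The missing step has a two-line replacement, which is exactly what the paper does, and it uses convex position only through the fact that the diagonals cross. Let $c$ be the intersection point of $\seg(c_1,c_3)$ and $\seg(c_2,c_4)$. Then $d(c_1,c_3)+d(c_2,c_4)=\bigl(d(c_1,c)+d(c,c_2)\bigr)+\bigl(d(c_3,c)+d(c,c_4)\bigr)\geqslant d(c_1,c_2)+d(c_3,c_4)>r_1+r_2+r_3+r_4$, where the last inequality uses that $\{1,2\}$ and $\{3,4\}$ are non-edges. Hence $d(c_1,c_3)>r_1+r_3$ or $d(c_2,c_4)>r_2+r_4$, contradicting the fact that both diagonal pairs are edges. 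In other words, the quadrilateral inequality ``sum of diagonals exceeds sum of a pair of opposite sides'' does all the work; no coordinates, lens geometry, or case analysis on the radii is needed, and your proposal should replace its third and fourth paragraphs by this argument.
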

\begin{proof}
  Let $c_1$ and $c_2$ be the centers of one non-edge, and $c_3$ and $c_4$ the centers of the other non-edge.
  Let $r_i$ be the radius associated to center $c_i$ for $i \in [4]$.
  It should be that $d(c_1,c_2)>r_1+r_2$ and $d(c_3,c_4)>r_3+r_4$ (see Figure~\ref{fig:k22}).
  Assume $c_1$ and $c_2$ are consecutive on the convex hull formed by $\{c_1, c_2, c_3, c_4\}$, and say, without loss of generality, that the order is $c_1, c_2, c_3, c_4$.
  Let $c$ be the intersection of $\seg(c_1,c_3)$ and $\seg(c_2,c_4)$.
  It holds that $d(c_1,c_3) + d(c_2,c_4) = d(c_1,c) + d(c,c_3) + d(c_2,c) + d(c,c_4) = (d(c_1,c)+d(c,c_2)) + (d(c_3,c)+d(c,c_4)) > d(c_1,c_2) + d(c_3,c_4) > r_1 + r_2 + r_3 + r_4 = (r_1+r_3)+(r_2+r_4)$.
  Which implies that $d(c_1,c_3) > r_1+r_3$ or $d(c_2,c_4) > r_2+r_4$; a contradiction.
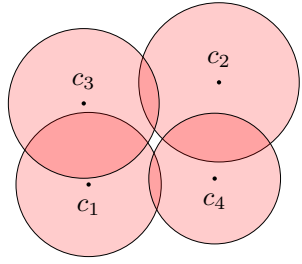
\begin{figure}
\centering
\begin{tikzpicture}[scale=0.25,
    dot/.style={fill,circle,inner sep=-0.02cm}
  ]
\draw[very thin,fill=red, fill opacity=0.2] (11.8405399558, 12.6637388353) circle (3.82299906735);
\node[dot] at (11.8405399558, 12.6637388353) {};
\node at (11.8405399558, 12.6637388353 - 1.2) {$c_1$};
\draw[very thin,fill=red, fill opacity=0.2] (18.7026380772, 18.0821427855) circle (4.22302853751);
\node[dot] at (18.7026380772, 18.0821427855) {};
\node at (18.7026380772, 18.0821427855 + 1.2) {$c_2$};
\draw[very thin,fill=red, fill opacity=0.2] (11.5839455246, 16.9641945052) circle (3.96909254758);
\node[dot] at (11.5839455246, 16.9641945052) {};
\node at (11.5839455246, 16.9641945052 + 1.2) {$c_3$};
\draw[very thin,fill=red, fill opacity=0.2] (18.4743856348, 12.97903103) circle (3.46657454294);
\node[dot] at (18.4743856348, 12.97903103) {};
\node at (18.4743856348, 12.97903103 - 1.2) {$c_4$};
\end{tikzpicture}
\caption{Disk realization of a $K_{2,2}$. As the centers are positioned, it is impossible that the two non-edges are between the disks 2 and 3, and between the disks 1 and 4 (or between the disks 1 and 3, and between the disks 2 and 4).}\label{fig:k22}
\end{figure}
\end{proof}

We derive a useful consequence of the previous lemma, phrased in terms of intersections of lines and segments.

\begin{corollary}\label{cor:intersect}
In any disk representation of $K_{2,2}$ with centers $c_1, c_2, c_3, c_4$ with the two non-edges between the vertices corresponding to $c_1$ and $c_2$, and between $c_3$ and $c_4$, it should be that $\ell(c_1,c_2)$ intersects $\seg(c_3,c_4)$ or $\ell(c_3,c_4)$ intersects $\seg(c_1,c_2)$.    
\end{corollary}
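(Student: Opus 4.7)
The plan is to perform a case analysis based on the relative positions of the four centers. By Lemma~\ref{lem:generalPosition}, we may assume no three centers are collinear, so either the four centers are in convex position, or exactly one of them lies strictly inside the triangle formed by the other three.

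In the convex case, Lemma~\ref{lem:cotwoKtwo} tells us that the non-edge pairs correspond to opposite corners of the convex quadrilateral. Hence $\seg(c_1,c_2)$ and $\seg(c_3,c_4)$ are precisely the two diagonals of this quadrilateral, and therefore intersect in an interior point. In particular both $\ell(c_1,c_2)$ meets $\seg(c_3,c_4)$ and $\ell(c_3,c_4)$ meets $\seg(c_1,c_2)$, so the conclusion holds a fortiori.

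In the non-convex case, one center lies strictly inside the triangle formed by the other three. If the interior center is $c_1$ (or by symmetry $c_2$), then the line $\ell(c_1,c_2)$ enters the triangle $c_2c_3c_4$ at the vertex $c_2$ and passes through an interior point $c_1$; since it is not the line supporting any side (by general position), it must exit the triangle through the side opposite to $c_2$, namely $\seg(c_3,c_4)$, giving the first alternative. If instead the interior center is $c_3$ or $c_4$, the symmetric argument applied to the triangle with vertex set $\{c_1,c_2\}$ and the other center of $\{c_3,c_4\}$ shows that $\ell(c_3,c_4)$ crosses $\seg(c_1,c_2)$, giving the second alternative.

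I do not anticipate a real obstacle here: Lemma~\ref{lem:cotwoKtwo} handles the convex subcase immediately, and the non-convex subcase is a short planar geometry argument using the fact that a line through a vertex and an interior point of a triangle exits through the opposite side. The only thing to be careful about is invoking Lemma~\ref{lem:generalPosition} to rule out the degenerate collinear configurations that would otherwise make the dichotomy ``convex vs.\ one point strictly inside'' incomplete.
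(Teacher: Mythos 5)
Your proof is correct and follows essentially the same route as the paper: the convex case is dispatched via Lemma~\ref{lem:cotwoKtwo} (the non-edges are the diagonals, which cross), and the non-convex case is handled by observing that the line through the interior center and its non-neighbor crosses the opposite side of the triangle. The only difference is that you spell out the triangle-exit argument and the role of general position a bit more explicitly than the paper does, which is fine.
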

\begin{proof}
  Either the disk representation has the four centers in convex position.
  Then, by Lemma~\ref{lem:cotwoKtwo}, $\seg(c_1,c_2)$ and $\seg(c_3,c_4)$ are the diagonals of a convex quadrangle.
  Hence they intersect, and \emph{a fortiori}, $\ell(c_1,c_2)$ intersects $\seg(c_3,c_4)$ ($\ell(c_3,c_4)$ intersects $\seg(c_1,c_2)$, too).

  Or the disk representation has one center, say without loss of generality, $c_1$, in the interior of the triangle formed by the other  three centers.
  In this case, $\ell(c_1,c_2)$ intersects $\seg(c_3,c_4)$.
  If instead a center in $\{c_3,c_4\}$ is in the interior of the triangle formed by the other centers, then $\ell(c_3,c_4)$ intersects $\seg(c_1,c_2)$.
\end{proof}

We can now prove the main result of this section thanks to the previous corollary, parity arguments, and \emph{some elementary properties of closed plane curves}, namely Property I and Property III of the eponymous paper \cite{Tait1877}.

\begin{theorem}\label{thm:main-structural-non-disk} 
  The complement of the disjoint union of two odd cycles is not a disk graph.
\end{theorem}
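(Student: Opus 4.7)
The plan is to reason by contradiction. I would assume that $\overline{G}$ is a disk graph, where $G = C_1 \sqcup C_2$ is the disjoint union of two odd cycles of lengths $n_1$ and $n_2$, and by Lemma~\ref{lem:generalPosition} fix a proper disk representation with no three centers collinear. For $i \in \{1,2\}$, let $P_i$ be the closed polygonal curve obtained by joining the centers $\{c_v : v \in V(C_i)\}$ in the cyclic order of $C_i$. The goal is to count the transversal crossings between $P_1$ and $P_2$ in two different ways and reach a parity contradiction.

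For every pair of edges $e = uu' \in E(C_1)$ and $f = vv' \in E(C_2)$, the four vertices $u,u',v,v'$ induce a $K_{2,2}$ in $\overline{G}$ whose non-edges are $\{u,u'\}$ and $\{v,v'\}$, so by Corollary~\ref{cor:intersect} at least one of $\ell(c_u,c_{u'}) \cap \seg(c_v,c_{v'}) \neq \emptyset$ or $\ell(c_v,c_{v'}) \cap \seg(c_u,c_{u'}) \neq \emptyset$ holds. I would denote these two Boolean indicators by $X_{ef}$ and $Y_{ef}$; by general position, $X_{ef}=1$ iff $c_v, c_{v'}$ lie on opposite sides of $\ell(e)$, and symmetrically for $Y_{ef}$. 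In particular $\seg(e) \cap \seg(f) \neq \emptyset$ iff $X_{ef}=Y_{ef}=1$, while always $X_{ef}+Y_{ef} \in \{1,2\}$.

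Next I would perform the parity calculation. Fixing $f \in E(C_2)$ and labeling each vertex $u \in V(C_1)$ by the side of $\ell(f)$ that contains $c_u$, the sum $\sum_{e \in E(C_1)} Y_{ef}$ counts the sign changes along the closed cyclic sequence of $\pm$-labels around $C_1$, hence is even. By symmetry $\sum_{f \in E(C_2)} X_{ef}$ is also even for each fixed $e$, so $\sum_{(e,f)}(X_{ef}+Y_{ef}) \equiv 0 \pmod{2}$. Since each pair contributes $1$ or $2$ and the total number of pairs $n_1 n_2$ is odd, splitting the sum by value forces the number $B$ of pairs with $\seg(e) \cap \seg(f) \neq \emptyset$ to be odd; equivalently, $P_1$ and $P_2$ meet transversally an odd number of times.

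The finish will rely on an elementary topological fact about closed plane curves, in the spirit of Properties I and III of~\cite{Tait1877}: any two closed polygonal curves in general position in the plane cross transversally an even number of times, because the faces of $\mathbb{R}^2 \setminus P_2$ can be two-coloured by the winding-number parity of $P_2$, every transversal crossing of $P_2$ toggles the colour, and a closed curve such as $P_1$ must return to its initial face. This contradicts the odd value of $B$ above. The hard part, I expect, will be the preceding parity manipulation: Corollary~\ref{cor:intersect} only promises a disjunctive condition about \emph{lines} meeting segments, so a direct segment-crossings count seems out of reach, and it is the oddness of $n_1 n_2$, combined with the evenness of the marginal sums of $X_{ef}$ and $Y_{ef}$, that forces the disjunction to collapse into an odd number of genuine segment crossings.
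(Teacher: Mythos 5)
Your proposal is correct and follows essentially the same route as the paper: both reduce to Corollary~\ref{cor:intersect} for every pair of non-edge segments of the two polygonal cycles, and then play the evenness of line-versus-closed-curve and curve-versus-curve crossing counts against the odd product of the two cycle lengths. Your per-pair indicators $X_{ef},Y_{ef}$ are just a repackaging of the paper's $a_i,b_i,c_i$ inclusion--exclusion, with the contradiction extracted from the segment-crossing count rather than from $\sum_i b_i$.
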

\begin{proof}
  Let $s$ and $t$ be two positive integers and $G=\overline{C_{2s+1} + C_{2t+1}}$ the complement of the disjoint union of a cycle of length $2s+1$ and a cycle of length $2t+1$.
  Assume that $G$ is a disk graph.
  Let $\mathcal C_1$ (resp. $\mathcal C_2$) be the cycle embedded in the plane formed by $2s+1$ (resp. $2t+1$) straight-line segments joining the consecutive centers of disks along the first (resp. second) cycle.
 Observe that the segments of those two cycles correspond to the non-edges of $G$.
 We number the segments of $\mathcal C_1$ from $S_1$ to $S_{2s+1}$, and the segments of $\mathcal C_2$, from $S'_1$ to $S'_{2t+1}$.
  
  For the $i$-th segment $S_i$ of $\mathcal C_1$, let $a_i$ be the number of segments of $\mathcal C_2$ intersected by the line $\ell(S_i)$ prolonging $S_i$, let $b_i$ be the number of segments $S'_j$ of $\mathcal C_2$ such that the prolonging line $\ell(S'_j)$ intersects $S_i$, and let $c_i$ be the number of segments of $\mathcal C_2$ intersecting $S_i$.
  For the second cycle, we define similarly $a'_j$, $b'_j$, $c'_j$.
  The quantity $a_i+b_i-c_i$ counts the number of segments of $\mathcal C_2$ which can possibly represent a $K_{2,2}$ with $S_i$ according to Corollary~\ref{cor:intersect}.
  As we assumed that $G$ is a disk graph, $a_i+b_i-c_i = 2t+1$ for every $i \in [2s+1]$.
  Otherwise there would be at least one segment $S'_j$ of $\mathcal C_2$ such that $\ell(S_i)$ does not intersect $S'_j$ \emph{and} $\ell(S'_j)$ does not intersect $S_i$.
  
  Observe that $a_i$ is an even integer since $\mathcal C_2$ is a closed curve.
  Also, $\Sigma_{i=1}^{2s+1}a_i+b_i-c_i=(2t+1)(2s+1)$ is an odd number, as the product of two odd numbers.
  This implies that $\Sigma_{i=1}^{2s+1}b_i-c_i$ shall be odd.
  $\Sigma_{i=1}^{2s+1}c_i$ counts the number of intersections of the two closed curves $\mathcal C_1$ and $\mathcal C_2$, and is therefore even.
  Hence, $\Sigma_{i=1}^{2s+1}b_i$ shall be odd.
  Observe that $\Sigma_{i=1}^{2s+1}b_i=\Sigma_{j=1}^{2t+1}a'_j$ by reordering and reinterpreting the sum from the point of view of the segments of $\mathcal C_2$.
  Since the $a'_j$ are all even, $\Sigma_{i=1}^{2s+1}b_i$ is also even; a contradiction.
\end{proof}

\subsection{The disjoint union of cycles with at most one odd is co-disk}
\label{subsec:co-disk}

We only show the following part of Theorem~\ref{thm:main-structural} to emphasize that, rather unexpectedly, parity plays a crucial role in disk graphs of co-degree 2.
It is also amusing that the complement of any odd cycle is a \emph{unit} disk graph while the complement of any even cycle of length at least 8 is not \cite{Atminas16}.
Here, the situation is somewhat reversed: complements of even cycles are \emph{easier} to represent than complements of odd cycles.

\begin{theorem}\label{thm:coEvenCycles}
The complement of the disjoint union of even cycles and one odd cycle is a disk graph.
\end{theorem}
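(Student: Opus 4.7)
The plan is to construct an explicit disk representation. Since $\overline{G_1 \sqcup \cdots \sqcup G_\ell} = \overline{G_1} + \overline{G_2} + \cdots + \overline{G_\ell}$ (where $+$ denotes graph join), the task splits into (a) building a disk representation of the complement of each individual cycle, and (b) combining those representations into one realizing their join, i.e., ensuring in addition that every pair of disks coming from different cycles intersect.

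For step (a), the single odd cycle is immediate: by the Atminas--Zamaraev result mentioned just above the statement, $\overline{C_{2s+1}}$ is already a unit disk graph. For an even cycle, I would exhibit a disk representation of $\overline{C_{2k}}$ (for each $k \geq 2$) that exploits the bipartition of $C_{2k}$. Concretely, I would place the odd-indexed centers on one horizontal line and the even-indexed centers on a second parallel line at small distance $\varepsilon$, in a zigzag fashion, and tune the radii so that only consecutive-on-cycle pairs of disks miss each other while every other pair overlaps. The parity of the cycle length is precisely what lets the alternation close up consistently around the cycle; an odd length would create an obstruction analogous to the one extracted in Theorem~\ref{thm:main-structural-non-disk}.

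For step (b), I proceed by induction on the number of cycles. Suppose a representation $R$ of the join of the first $i-1$ complements has been built inside a bounded region $B$. To add the $i$-th cycle, I take the representation $R_i$ of $\overline{C_{m_i}}$ from step (a), first scale it by a large factor $M$ (which preserves its intersection graph), and then add a common constant $c$ to each of its radii. Scaling multiplies the minimum non-edge gap of $R_i$ by $M$; therefore, choosing $M$ large enough and $c < M\,g_i/2$, the enlargement cannot destroy any non-edge of $R_i$. On the other hand, picking $c$ large enough forces each disk of the enlarged $R_i$ to reach into every disk of $R$, producing all the required cross-intersections. One performs the same enlargement on $R$ if necessary, which again preserves its non-edges by the same argument.

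The main obstacle is that in a representation of $\overline{C_{m_i}}$ with $m_i \geq 4$ the disks do not share a common point, so no translation can simply engulf $B$ inside every new disk. The workaround is the simultaneous use of scaling and radius-enlargement described above: after pre-scaling $R_i$ by a sufficiently large $M$, the available slack $c < M g_i / 2$ grows as fast as the centers of $R_i$ move apart, which leaves enough room so that for every pair of a new and an old disk the sum of radii (after enlargement) exceeds the distance between the centers. Making this numerology precise --- in particular choosing the position of $R_i$ relative to $B$ so that the inequality $|c_j^{(i)} - c_k| \leq r_j^{(i)} + c + r_k$ holds for all $j,k$ --- is the technical heart of the proof, and I would write it out first for two cycles and then iterate.
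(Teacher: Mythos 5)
Your step (b) is where the argument breaks down. The claim that pre-scaling by $M$ creates enough slack is illusory, because the constraint you need is scale-invariant: divide everything by $M$. Preserving the non-edges of $R_i$ forces $c'=c/M<g_i/2$, while the old representation $R$, being of fixed size, shrinks to (essentially) a single point $p$ in the rescaled picture; so as $M\to\infty$ the requirement that every enlarged new disk meet every old disk converges to the requirement that some point $p$ lie within distance $c'<g_i/2$ of \emph{every} disk of the original $R_i$. Scaling therefore buys nothing; what you need is an intrinsic geometric property of the representation $R_i$ itself. And it is a very restrictive one: if two disks of $R_i$ attain the minimum gap $g_i$, the set of points within distance $g_i/2$ of both is a single point (the midpoint of the segment realizing the gap), so all remaining disks of $R_i$ must come within $g_i/2$ of that one point. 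A generic representation fails this -- e.g.\ a representation of $\overline{C_4}$ as two far-apart ``dumbbells'' of unit disks admits no valid choice of $M$, $c$ and placement -- and you never verify it for the representations you propose in step (a) (a loosely specified zigzag for $\overline{C_{2k}}$, and the Atminas--Zamaraev unit-disk representation for the odd cycle). Allowing yourself to also enlarge $R$ does not remove the difficulty: $R$ has its own non-edges whose residual gaps must be preserved, so the same scale-invariant obstruction reappears on that side.

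This missing property is in fact the whole content of the theorem, and it is exactly what the paper's construction is engineered to provide: each cycle complement is represented so that \emph{all} its disks pass through (or arbitrarily close to) one tiny common region, with the non-edges realized by near-tangencies localized there; the copies are then combined not by enlargement but by superimposing them with small rotations about that common point. The single odd cycle needs special care in that scheme: the paper explicitly observes that the Atminas--Zamaraev representation of $\overline{C_{2s+1}}$ \emph{cannot} be used and builds a bespoke non-unit representation in which the one ``odd'' non-edge is placed at the common point, which is also why only one odd cycle can be accommodated. So your reduction to a join of cycle complements and your within-$R_i$ non-edge bookkeeping are fine, but the ``technical heart'' you defer is not mere numerology -- as stated, the enlargement scheme is unsound, and repairing it would force you to design the individual representations with essentially the same common-point structure the paper constructs.
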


\begin{proof}
We start with a disk representation of the complement of one even cycle $C_{2s}$.
  Again, this construction is not possible with \emph{unit} disks for even cycles of length at least 8.
  We assume that the vertices of the cycle $C_{2s}$ are $1, 2, \ldots, 2s$ in this order.
  For each $i \in [2s]$, the disk $\mathcal D_i$ encodes the vertex $i$. 
  We start by fixing the disks $\mathcal D_1$, $\mathcal D_2$, and $\mathcal D_{2s}$.
  Those three disks have the same radius.
  We place $\mathcal D_2$ and $\mathcal D_{2s}$ side by side: their centers have the same $y$-coordinate.
  They intersect and the distance between their center is $\varepsilon > 0$.
  We define $\mathcal D_1$ as the disk above $\mathcal D_2$ and $\mathcal D_{2s}$ tangent to those two disks and sharing the same radius.
  We denote by $p_1$ its intersection with $\mathcal D_2$ and by $p_s$ its intersection with $\mathcal D_{2s}$.
  We then slightly shift $\mathcal D_1$ upward so that it does not touch (nor does it intersect) $\mathcal D_2$ and $\mathcal D_{2s}$ anymore.
  While we do this translation, we imagine that the points $p_1$ and $p_s$ remain fixed at the boundary of $\mathcal D_2$ and $\mathcal D_{2s}$ respectively (see Figure~\ref{fig:complement-one-even-cycle1}).
  Let $p_2, p_3, \ldots, p_{s-1}$ points in the interior of $\mathcal D_1$ and below the line $\ell(p_1,p_s)$ such that $p_1, p_2, \ldots, p_{s-1}, p_s$ form an $x$-monotone convex chain (see Figure~\ref{fig:complement-one-even-cycle2}). 
  
  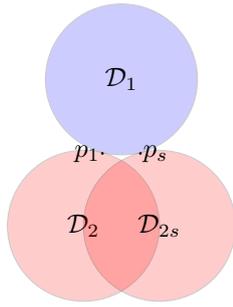
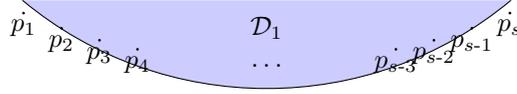
\begin{figure}[h!]
    \centering
    \begin{minipage}{0.3\textwidth}
      \centering
    \begin{tikzpicture}[
        dot/.style={fill,circle,inner sep=-0.01cm},
        vert/.style={draw, fill=red, opacity=0.2},
        verta/.style={draw, fill=blue, opacity=0.2},
      ]
      \def\s{1}
      \coordinate (c2) at (0,0) ;
      \draw[vert] (c2) circle (1) ;
      \node at (c2) {$\mathcal D_2$} ;

      \coordinate (c2s) at (\s,0) ;
      \draw[vert] (c2s) circle (1) ;
      \node at (c2s) {$\mathcal D_{2s}$} ;

      \coordinate (c1) at (\s / 2,2 - \s / 17) ;
      \draw[verta] (c1) circle (1) ;
      \node at (c1) {$\mathcal D_1$} ;

      \node[dot] at (0.25,0.96) {} ;
      \node[dot] at (\s - 0.25,0.96) {} ;

      \node at (0.05,0.96) {$p_1$} ;
      \node at (\s - 0.05,0.96) {$p_s$} ;   
    \end{tikzpicture}
    \subcaption{Three important disks with the same size $\mathcal D_1$, $\mathcal D_2$, $\mathcal D_{2s}$.}
    \label{fig:complement-one-even-cycle1}
    \end{minipage}
  \qquad
  \begin{minipage}{0.6\textwidth}
      \centering
      \begin{tikzpicture}
        [
          dot/.style={fill,circle,inner sep=-0.01cm}
        ]
        \centerarc[draw, fill=blue, fill opacity=0.2](0,0)(-50:-130:5) ;

        \def\xp{-3.2}
        \def\yp{-4}
        \def\xpb{3.2}
        \node[dot] at (\xp,\yp) {} ;
        \node at (\xp,\yp - 0.2) {$p_1$} ;
        \node[dot] at (\xpb,\yp) {} ;
        \node at (\xpb,\yp - 0.2) {$p_s$} ;

        \foreach \k/\i/\j in {2/0.5/0.2,3/1/0.36,4/1.5/0.48}{
          \node[dot] at (\xp+\i,\yp-\j) {} ;
          \node at (\xp+\i,\yp-\j- 0.2) {$p_\k$} ;
        }
        \foreach \k/\i/\j in {1/0.5/0.2,2/1/0.36,3/1.5/0.48}{
          \node[dot] at (\xpb-\i,\yp-\j) {} ;
          \node at (\xpb-\i,\yp-\j- 0.2) {$p_{s\text{-}\k}$} ;
        }
        \node at (0,-4.7) {$\ldots$} ;
        \node at (0,-4.2) {$\mathcal D_1$} ;
      \end{tikzpicture}
  \subcaption{Zoom where $\mathcal D_1$ almost touches $\mathcal D_2$ and $\mathcal D_{2s}$.}
  \label{fig:complement-one-even-cycle2}
  \end{minipage}
   \caption{The disks $\mathcal D_1$, $\mathcal D_2$, $\mathcal D_{2s}$ and the convex chain $p_1, p_2, \ldots, p_s$. The curvature of the boundary of $\mathcal D_1$ is exaggerated in the zoom for the sake of clarity.}
  \label{fig:complement-one-even-cycle}
\end{figure}

  Now, we define the disks $\mathcal D_4, \mathcal D_6, \ldots, \mathcal D_{2s-2}$.
  For each $i \in \{4,6,\ldots,2s-2\}$, let $\mathcal D_i$ be the unique disk with the same radius as $\mathcal D_2$ and such that the boundary of $\mathcal D_i$ crosses $p_{i/2}$ and is below its tangent $\tau_{i/2}$ at this point which has the direction of $\ell(p_{i/2-1},p_{i/2+1})$.
  
  It should be observed that the only disk with even index $i$ which contains $p_{i/2}$ is $\mathcal D_i$.
  We can further choose the convex chain $\{p_i\}_{i \in [s]}$ such that one co-tangent $\tau_{i,i+1}$ to $\mathcal D_{2i}$ and $\mathcal D_{2i+2}$ has a slope between the slopes of $\tau_i$ and $\tau_{i+1}$.
  Finally we define the disks $\mathcal D_3, \mathcal D_5, \ldots, \mathcal D_{2s-1}$.
  For each $i \in \{3,5,\ldots,2s-1\}$, let $\mathcal D_i$ be tangent to $\tau_{i,i+1}$ at the point of $x$-coordinate the mean between the $x$-coordinates of $p_{\frac{i-1}{2}}$ and $p_{\frac{i+1}{2}}$.
  Moreover, $\mathcal D_i$ is above $\tau_{i,i+1}$ and has a radius sufficiently large to intersect every disk with even index which are not $\mathcal D_{i-1}$ and $\mathcal D_{i+1}$.
  It is easy to see that the disks $\mathcal D_i$ with even index (resp. odd index) form a clique.
  By construction, the disk $\mathcal D_i$ with odd index greater than 3 intersects every disk with even index except $\mathcal D_{i-1}$ and $\mathcal D_{i+1}$ since $\mathcal D_i$ is on the other side of $\tau_{i,i+1}$ than those two disks.
  As the line $\tau_{i,i+1}$ intersects every other disk with even index, there is a sufficiently large radius so that $\mathcal D_i$ does so, too.
  The particular case of $\mathcal D_1$ has been settled at the beginning of the construction.
  This disk avoids $\mathcal D_2$ and $\mathcal D_{2s}$ and contains $p_2, p_3, \ldots, p_{s-1}$, so intersects all the other disks with even index.

  We now explain how to \emph{stack} even cycles. 
  We make the distance $\varepsilon$ between the center of $\mathcal D_2$ and $\mathcal D_{2s}$ a thousandth of their common radius.
  Note that this distance does not depend on the value of $s$.
  We identify the small region (point) where the disk $\mathcal D_1$ intersects with the disks of even index, between two different complements of cycles.
  We then rotate from this point one representation by a small angle (see Figure~\ref{fig:even-cycles-complement} for multiple complements of even cycles stacked).

\begin{figure}[h!]
      \centering
      \begin{tikzpicture}[
          dot/.style={fill,circle,inner sep=-0.01cm},
          vert/.style={draw, very thin, fill=red, fill opacity=0.2},
          verta/.style={draw, very thin, fill=blue, fill opacity=0.2},
          extended line/.style={shorten >=-#1,shorten <=-#1},
          extended line/.default=1cm,
          one end extended/.style={shorten >=-#1},
          one end extended/.default=1cm,
        ]
        \def\s{1}
        \def\hs{-4}
        \def\he{4}
        \def\ve{3}
        \coordinate (c1) at (0,1) ;
        \draw[verta] (c1) circle (1) ;
        \coordinate (c2) at (0,-1) ;
        \draw[vert] (c2) circle (1) ;

        \draw[very thin] (\hs,0) -- (\he,0) ;
        \fill[blue,fill opacity=0.2] (\hs,0) -- (\he,0) -- (\he,\ve) -- (\hs,\ve) -- cycle ;

        \foreach \i in {-20,-10,10,20}{
        \begin{scope}[rotate=\i]
        \coordinate (c1) at (0,1) ;
        \draw[verta] (c1) circle (1) ;
        \coordinate (c2) at (0,-1) ;
        \draw[vert] (c2) circle (1) ;
        \draw[very thin] (\hs,0) -- (\he,0) ;

        \fill[blue,opacity=0.2] (\hs,0) -- (\he,0) -- (\he,\ve) -- (\hs,\ve) -- cycle ;
        \end{scope}
        }

        \node at (0,1) {$\mathcal D_1$} ;
        \node at (0,2.4) {$\mathcal D_{2i+1}$} ;
        \node at (0,-1) {$\mathcal D_{2i}$} ;
      \end{tikzpicture}
      \caption{A disk realization of the complement of the disjoint union of an arbitrary number of even cycles.}
      \label{fig:even-cycles-complement}
\end{figure}

The reason why there are indeed all the edges between two complements of cycles is intuitive and depicted in Figure~\ref{fig:clique-minus-matching} and more specifically Figure~\ref{fig:cmm-lines}.
We superimpose all the complements of even cycles in a way that the maximum rotation angle between two complements of cycles is small (see for instance Figure~\ref{fig:even-cycles-odd-cycle-complement}).

\begin{figure}[h!]
  \centering
  \begin{minipage}{0.55\textwidth}
    \centering
    \begin{tikzpicture}
      [
        scale = 1.5,
          dot/.style={fill,circle,inner sep=-0.01cm},
          vert/.style={draw, very thin, fill=red, fill opacity=0.2},
          verta/.style={draw, very thin, fill=blue, fill opacity=0.2},
          vertb/.style={draw, very thin, fill=green, fill opacity=0.2},
      ]
    \def\r{1.2}
    \def\n{5}
    \foreach \h in {1,...,\n}{
      \pgfmathsetmacro{\i}{15 * (\h - \n / 2 - 0.5)}
      \pgfmathsetmacro{\j}{25 * (\h-1)}
      \draw[rotate=\i,verta] (0,0) arc (-90:270:\r) ;
      \draw[rotate=\i,vert] (0,-0.01) arc (90:-270:\r) ;
    }
     \node at (0,1) {$\mathcal D_1$} ;
    \node at (0,-1) {$\mathcal D_{2i}$} ;
  \end{tikzpicture}
  \subcaption{The only potential non-edges are between two disks represented almost tangent.}
  \label{fig:cmm-circles}
  \end{minipage}
  \qquad
  \begin{minipage}{0.35\textwidth}
      \centering
      \begin{tikzpicture}
        [
          scale = 1.4,
          dot/.style={fill,circle,inner sep=-0.01cm},
          vert/.style={draw, very thin, fill=red, fill opacity=0.2},
          verta/.style={draw, very thin, fill=blue, fill opacity=0.2},
          vertb/.style={draw, very thin, fill=green, fill opacity=0.2},
        ]
        \def\n{5}
        \foreach \h in {1,...,\n}{
          \pgfmathsetmacro{\i}{15 * (\h - \n / 2 - 0.5)}
          \pgfmathsetmacro{\j}{25 * (\h-1)}
          \draw[rotate=\i,blue] (0,0) -- (-2,0) ;
          \draw[rotate=\i,blue] (0,0) -- (2,0) ;
          \draw[rotate=\i,red] (0,-0.1) -- (-2,-0.1) ;
          \draw[rotate=\i,red] (0,-0.1) -- (2,-0.1) ;
        }
  \end{tikzpicture}
  \subcaption{Zoom in where the boundary of the disks intersect.}
  \label{fig:cmm-lines}
  \end{minipage}
   \caption{Zoom in where the disk $\mathcal D_1$ of the several complements of even cycles intersects all the $\mathcal D_{2i}$ of the other cycles.}
  \label{fig:clique-minus-matching}
\end{figure}
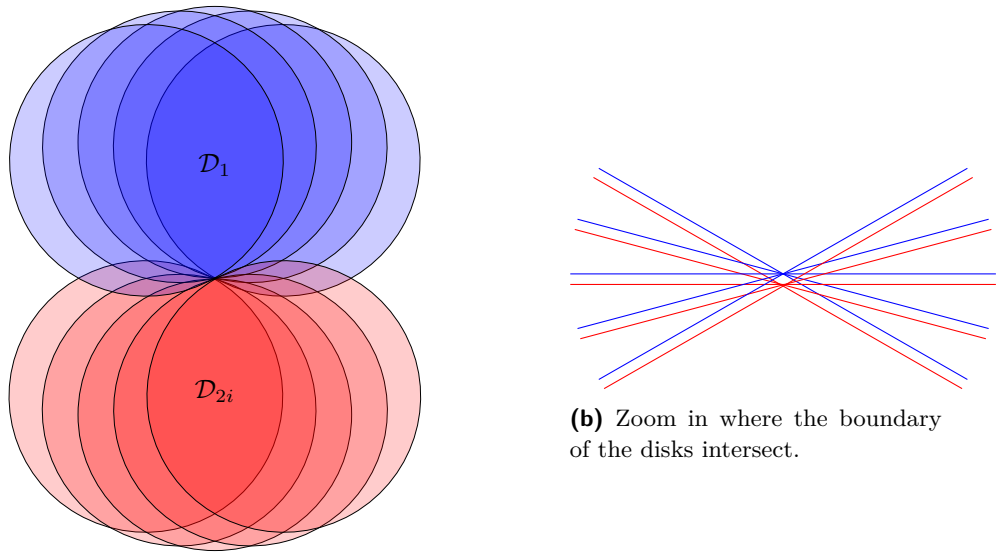

Finally, we need to add one disjoint odd cycle in the complement.
There is a nice representation of a complement of an odd cycle by unit disks in the paper of Atminas and Zamaraev \cite{Atminas16} (see Figure~\ref{fig:atminas-zamaraev}).
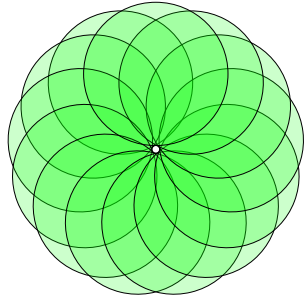
\begin{figure}[h!]
      \centering
      \begin{tikzpicture}[
          dot/.style={fill,circle,inner sep=-0.01cm},
          vert/.style={draw, very thin, fill=red, fill opacity=0.2},
          verta/.style={draw, very thin, fill=blue, fill opacity=0.2},
          vertb/.style={draw, very thin, fill=green, fill opacity=0.2},
        ]
        \def\s{13}
        \foreach \i in {1,...,\s}{
        \begin{scope}[rotate=360 * \i / 13]
        \coordinate (d\i) at (0,1) ;
        \draw[vertb] (d\i) circle (0.95) ;
        \end{scope}
        }
        
      \end{tikzpicture}
      \caption{A disk realization of the complement of an odd cycle with unit disks as described by Atminas and Zamaraev \cite{Atminas16}. Unfortunately, we cannot use this representation.}
      \label{fig:atminas-zamaraev}
\end{figure}

We will use a different and non-unit representation for the next step to work.
Let $2s+1$ be the length of the cycle.
We use a similar construction as for the complement of an even cycle.
We denote the disks $\mathcal D'_1, \mathcal D'_2, \ldots, \mathcal D'_{2s+1}$.
The difference is that we separate $\mathcal D'_1$ away from $\mathcal D'_2$ but not from $\mathcal D'_{2s}$.
Then, we represent all the disks with odd index but $\mathcal D'_{2s+1}$ as before.
The disk $\mathcal D'_{2s+1}$ is chosen as being cotangent to $\mathcal D'_1$ and $\mathcal D'_{2s}$ and to the left of them.
Then we very slighlty move $\mathcal D'_{2s+1}$ to the left so that it does not intersect those two disks anymore.
The disk $\mathcal D'_{2s}$ have the rightmost center among the disks with even index.
Therefore $\mathcal D'_{2s+1}$ still intersects all the other disks of even index.

Moreover, the disks with even index form a clique and the disks with odd index form a clique minus an edge between the vertex $1$ and the vertex $2s+1$.
Hence, the intersection graph of those disks is indeed the complement of $C_{2s+1}$ (see Figure~\ref{fig:odd-cycle-complement}).

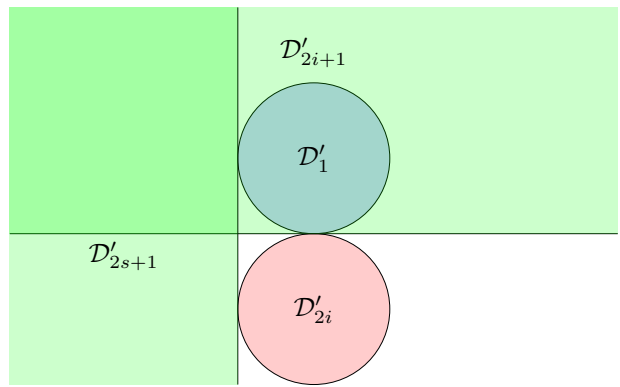
\begin{figure}[h!]
      \centering
      \begin{tikzpicture}[
          dot/.style={fill,circle,inner sep=-0.01cm},
          vert/.style={draw, very thin, fill=red, fill opacity=0.2},
          verta/.style={draw, very thin, fill=blue, fill opacity=0.2},
          vertb/.style={draw, very thin, fill=green, fill opacity=0.2},
        ]
        \def\s{1}
        \def\hs{-4}
        \def\he{4}
        \def\ve{3}
        \def\vs{-2}
        \coordinate (c1) at (0,1) ;
        \draw[verta] (c1) circle (1) ;
        \coordinate (c2) at (0,-1) ;
        \draw[vert] (c2) circle (1) ;

        \draw[very thin] (\hs,0) -- (\he,0) ;
        \fill[green,fill opacity=0.2] (\hs,0) -- (\he,0) -- (\he,\ve) -- (\hs,\ve) -- cycle ;

        \draw[very thin] (-1,\vs) -- (-1,\ve) ;
        \fill[green,fill opacity=0.2] (-1,\vs) -- (-1,\ve) -- (\hs,\ve) -- (\hs,\vs) -- cycle ;
        
        \node at (0,2.4) {$\mathcal D'_{2i+1}$} ;
        \node at (0,-1) {$\mathcal D'_{2i}$} ;
        \node at (-2.5,-0.3) {$\mathcal D'_{2s+1}$} ;
        \node at (0,1) {$\mathcal D'_1$} ;
      \end{tikzpicture}
      \caption{A disk realization of the complement of an odd cycle of length $2s+1$.}
      \label{fig:odd-cycle-complement}
\end{figure}

This representation of $\overline{C_{2s+1}}$ can now be put on top of complements of even cycles.
We identify the small region (point) where the disk $\mathcal D_1$ intersects the disks of even index (in complements of even cycles) with the small region (point) where the disk $\mathcal D'_1$ intersects the disks of even index (in the one complement of odd cycle).
We make the disk $\mathcal D'_1$ significantly smaller than $\mathcal D_1$ and rotate the representation of $\overline{C_{2s+1}}$ by a sizable angle, say 60 degrees (see Figure~\ref{fig:even-cycles-odd-cycle-complement}). 

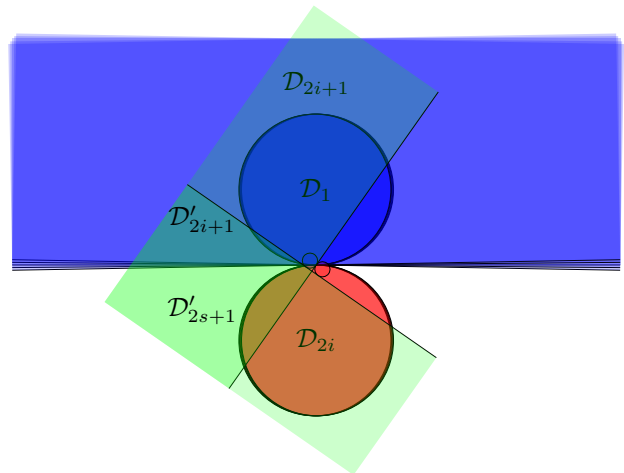
\begin{figure}[h!]
      \centering
      \begin{tikzpicture}[
          dot/.style={fill,circle,inner sep=-0.01cm},
          vert/.style={draw, very thin, fill=red, fill opacity=0.2},
          verta/.style={draw, very thin, fill=blue, fill opacity=0.2},
          vertb/.style={draw, very thin, fill=green, fill opacity=0.2},
        ]        
         \def\s{1}
        \def\hs{-4}
        \def\he{4}
        \def\ve{3}
        \coordinate (c1) at (0,1) ;
        \draw[verta] (c1) circle (1) ;
        \coordinate (c2) at (0,-1) ;
        \draw[vert] (c2) circle (1) ;

        \draw[very thin] (\hs,0) -- (\he,0) ;
        \fill[blue,fill opacity=0.2] (\hs,0) -- (\he,0) -- (\he,\ve) -- (\hs,\ve) -- cycle ;

        \foreach \i in {-1,-0.5,0.5,1}{
        \begin{scope}[rotate=\i]
        \coordinate (c1) at (0,1) ;
        \draw[verta] (c1) circle (1) ;
        \coordinate (c2) at (0,-1) ;
        \draw[vert] (c2) circle (1) ;
        \draw[very thin] (\hs,0) -- (\he,0) ;

        \fill[blue,opacity=0.2] (\hs,0) -- (\he,0) -- (\he,\ve) -- (\hs,\ve) -- cycle ;
        \end{scope}
        }

        \node at (0,1) {$\mathcal D_1$} ;
        \node at (0,2.4) {$\mathcal D_{2i+1}$} ;
        \node at (0,-1) {$\mathcal D_{2i}$} ;

        \def\t{0.1}
        \def\nve{2}
        \def\nhs{-2}
        \def\nhe{2.8}
        
        \begin{scope}[rotate=55]
        \coordinate (d1) at (0,\t) ;
        \draw[verta] (d1) circle (\t) ;
        \coordinate (d2) at (0,-\t) ;
        \draw[vert] (d2) circle (\t) ;

        \draw[very thin] (\nhs,0) -- (\nhe,0) ;
        \fill[green,fill opacity=0.2] (\nhs,0) -- (\nhe,0) -- (\nhe,\nve) -- (\nhs,\nve) -- cycle ;

        \draw[very thin] (-\t,-\nve) -- (-\t,\nve) ;
        \fill[green,fill opacity=0.2] (-\t,-\nve) -- (-\t,\nve) -- (\nhs,\nve) -- (\nhs,-\nve) -- cycle ;
        \end{scope}

        \node at (-1.5,0.6) {$\mathcal D'_{2i+1}$} ;
        \node at (-1.5,-0.6) {$\mathcal D'_{2s+1}$} ;

      \end{tikzpicture}
      \caption{Placing the complement of odd cycle on top of the complements of even cycles.}
      \label{fig:even-cycles-odd-cycle-complement}
\end{figure}

It is easy to see that the disks of the complement of the odd cycle intersect all the disks of the complements of even cycles.
A good sanity check is to observe why we cannot stack representations of complements of odd cycles, with the same rotation scheme.
In Figure~\ref{fig:sanity-check}, the rotation of two representations of the complement of an odd cycle leaves disks $\mathcal D'_1$ and $\mathcal D''_{2s'+1}$ far apart when they should intersect.
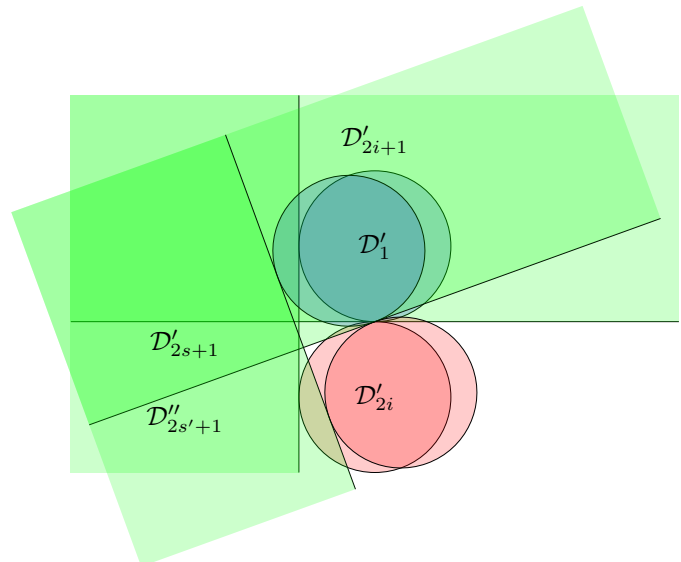
\begin{figure}[h!]
      \centering
      \begin{tikzpicture}[
          dot/.style={fill,circle,inner sep=-0.01cm},
          vert/.style={draw, very thin, fill=red, fill opacity=0.2},
          verta/.style={draw, very thin, fill=blue, fill opacity=0.2},
          vertb/.style={draw, very thin, fill=green, fill opacity=0.2},
        ]
        \def\s{1}
        \def\hs{-4}
        \def\he{4}
        \def\ve{3}
        \def\vs{-2}
        \foreach \i in {0,20}{
        \begin{scope}[rotate=\i]
        \coordinate (c1) at (0,1) ;
        \draw[verta] (c1) circle (1) ;
        \coordinate (c2) at (0,-1) ;
        \draw[vert] (c2) circle (1) ;

        \draw[very thin] (\hs,0) -- (\he,0) ;
        \fill[green,fill opacity=0.2] (\hs,0) -- (\he,0) -- (\he,\ve) -- (\hs,\ve) -- cycle ;

        \draw[very thin] (-1,\vs) -- (-1,\ve) ;
        \fill[green,fill opacity=0.2] (-1,\vs) -- (-1,\ve) -- (\hs,\ve) -- (\hs,\vs) -- cycle ;
        \end{scope}
        }
        
        \node at (0,2.4) {$\mathcal D'_{2i+1}$} ;
        \node at (0,-1) {$\mathcal D'_{2i}$} ;
        \node at (-2.5,-0.3) {$\mathcal D'_{2s+1}$} ;
        \node at (0,1) {$\mathcal D'_1$} ;

        \node at (-2.5,-1.3) {$\mathcal D''_{2s'+1}$} ;
      \end{tikzpicture}
      \caption{Sanity check: trying to stack the complements of two odd cycles fails.
      The disks $\mathcal D'_1$ and $\mathcal D''_{2s'+1}$ do not intersect.}
      \label{fig:sanity-check}
\end{figure}

\end{proof}

Theorem~\ref{thm:main-structural-non-disk} and Theorem~\ref{thm:coEvenCycles}, together with the fact that disk graphs are closed by taking induced subgraphs prove Theorem~\ref{thm:main-structural}.

\section{Algorithmic consequences}\label{sec:algorithms}

Now we show how to use the structural results from Section \ref{sec:structural} to obtain algorithms for \cli in disk graphs.
A clique in a graph $G$ is an independent set in $\overline{G}$. 
So, leveraging the result from Theorem \ref{thm:main-structural}, we will focus on solving \mis in graphs without two vertex-disjoint odd cycles as an induced subgraph.
 
\subsection{QPTAS}

The odd cycle packing number $\ocp(H)$ of a graph $H$ is the maximum number of vertex-disjoint odd cycles in $H$.
Unfortunately, the condition that $\overline{G}$ does not contain two vertex-disjoint odd cycles as an induced subgraph is not quite the same as saying that the odd cycle packing number of $\overline{G}$ is 1.
Otherwise, we would immediately get a PTAS by the following result of Bock et al.~\cite{Bock14}.
\begin{theorem}[Bock et al.~\cite{Bock14}]\label{thm:bock-ptas}
For every fixed $\varepsilon >0$ there is a polynomial $(1+\varepsilon)$-approximation algorithm for \mis for graphs $H$ with $n$ vertices and  $\ocp(H) = o(n / \log n)$.
\end{theorem}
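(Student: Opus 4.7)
The plan is LP-based. Consider the standard \mis relaxation, maximizing $\sum_v x_v$ subject to $x_u + x_v \leq 1$ on every edge and $x \geq 0$. It is classical that its optimum equals $\alpha(H)$ when $H$ is bipartite (K\"onig's theorem / total unimodularity), and that the integrality gap is at most $2$ in general. The strategy is to show that the gap shrinks to $1 + o(1)$ whenever $H$ has few vertex-disjoint odd cycles, and then to round.

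First I would strengthen the LP with all odd-cycle inequalities $\sum_{v \in C} x_v \leq (|C|-1)/2$, obtaining an optimum $\tau(H)$ that remains polynomial-time computable: a separation oracle for these inequalities is provided by shortest odd walks in the bipartite double cover, so the ellipsoid method applies. The key structural lemma to establish is an integrality-gap bound of the form $\tau(H) - \alpha(H) = O(\ocp(H) \cdot \mathrm{polylog}(n))$. Morally, any fractional optimum can be cleanly integer-rounded on the bipartite portion of $H$, and the residual rounding error localises on disjoint odd cycles in the support; since these can be packed only $\ocp(H)$ times, the total loss is controlled.

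Second I would convert this into a PTAS via a case split on $\alpha(H)$. Fix $\varepsilon > 0$. If $\alpha(H)$ is at most a constant $c(\varepsilon)$, exhaustive enumeration over independent sets of size up to $c(\varepsilon)$ runs in polynomial time. Otherwise $\alpha(H)$ is large; combined with the hypothesis $\ocp(H) = o(n/\log n)$, one argues --- via a pigeonhole on a large bipartite subgraph obtained after deleting one vertex per packed odd cycle --- that $\alpha(H) = \Omega(n)$. Then the additive error from Step~1 is $o(n)$ and hence $o(\alpha(H))$, so any natural LP-rounding procedure outputs a $(1+\varepsilon)$-approximate independent set.

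The main obstacle is the quantitative integrality-gap bound. Odd cycles are known \emph{not} to satisfy the Erd\H{o}s--P\'osa property with constant ratio on general graphs, so one cannot simply reduce to a small odd-cycle transversal and invoke bipartite tractability on the remainder. The argument has to engage directly with the LP --- tracking fractional mass along odd cycles rather than passing through a transversal --- and making this tight enough to survive the stringent threshold $\ocp(H) = o(n/\log n)$ is where the bulk of the combinatorial work in \cite{Bock14} presumably lies.
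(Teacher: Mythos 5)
This theorem is not proved in the paper at all: it is quoted from Bock et al.~\cite{Bock14}, and the only ingredient of their argument that the paper exposes is Theorem~\ref{thm:occ} (the Gy\H{o}ri--Kostochka--\L{}uczak bound on odd cycle covers of graphs with large odd girth, made algorithmic by Bock et al.). The known proof is purely combinatorial along those lines: as long as there is an odd cycle of length below a suitable threshold $\delta n$ (with $\delta$ of order $\varepsilon/\ocp(H)$), delete \emph{all} its vertices --- the deleted cycles are vertex-disjoint, so there are at most $\ocp(H)$ of them and the total loss is $O(\ocp(H)\cdot \delta n)=O(\varepsilon n)$; the remaining graph has odd girth at least $\delta n$, hence by Theorem~\ref{thm:occ} an odd cycle cover of size $O\bigl((1/\delta)\log(1/\delta)\bigr)=o(n)$, which is deleted as well; the graph is now bipartite and \mis is solved exactly. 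Since $\alpha(H)\geqslant (n-\ocp(H))/2=\Omega(n)$ under the hypothesis, the total additive loss is at most $\varepsilon\,\alpha(H)$ up to constants, giving the $(1+\varepsilon)$-approximation.

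Your LP-based plan is a genuinely different route, but as written it has a real gap: everything hinges on the ``key structural lemma'' that the odd-cycle-strengthened LP has additive integrality gap $\tau(H)-\alpha(H)=O(\ocp(H)\cdot\mathrm{polylog}\,n)$, and you give no argument for it --- the sentence that the rounding error ``localises on disjoint odd cycles in the support'' is precisely the step that fails to be obvious, because the support can contain many pairwise intersecting odd cycles and, as you yourself note, odd cycles have no constant-ratio Erd\H{o}s--P\'osa property, so one cannot charge the fractional loss to a packing in any straightforward way. You also leave the rounding procedure unspecified (``any natural LP-rounding procedure''), and it is not clear that the lemma is even true in the stated form; in any case it is not established here, so the proposal does not constitute a proof. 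The separation-oracle remark and the reduction of the approximation guarantee to an $o(n)$ additive error against $\alpha(H)=\Omega(n)$ are fine (the case $\alpha(H)\leqslant c(\varepsilon)$ is in fact vacuous, since $\alpha(H)\geqslant (n-\ocp(H))/2$), but the heart of the argument is missing, whereas the combinatorial deletion argument sketched above closes it with tools the paper already states.
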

The algorithm by Bock et al. works in polynomial time if $\ocp(H) = o(n / \log n)$, but it does not need the odd cycle packing explicitly given as an input. This is important, since finding a maximum odd cycle packing is NP-hard \cite{DBLP:conf/stoc/KawarabayashiR10}.
We start by proving a structural lemma, which spares us having to determine the odd cycle packing number.

\begin{lemma}\label{lem:bigdeg}
Let $H$ be a graph with $n$ vertices, whose complement is a disk graph.
If $\ocp(H) > n/ \log^2 n$, then $H$ has a vertex of degree at least $n / \log ^4 n$.
\end{lemma}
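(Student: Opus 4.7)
\bigskip

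\noindent\textbf{Proof plan.} The plan is to combine Theorem~\ref{thm:main-structural-non-disk} with a simple counting argument. Let $k := \ocp(H)$, so we are given vertex-disjoint odd cycles $C_1, \ldots, C_k$ in $H$ with $k > n/\log^2 n$. These cycles need not be induced, so the first step is to refine each $C_i$ to an \emph{induced} odd cycle $C_i' \subseteq H[V(C_i)]$, for instance by taking a shortest odd cycle of $H[V(C_i)]$ (any such cycle is automatically chordless, hence induced). Since the $V(C_i)$ are pairwise disjoint, so are the $V(C_i')$.

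Next I would use the structural consequence of Theorem~\ref{thm:main-structural-non-disk}: for every pair $i \neq j$, the graph $H$ must contain at least one edge between $V(C_i')$ and $V(C_j')$. Indeed, otherwise $H[V(C_i') \cup V(C_j')]$ would be the disjoint union of two odd cycles, and so $\overline{H}$, which is a disk graph, would contain as an induced subgraph the complement of that union, contradicting Theorem~\ref{thm:main-structural-non-disk} (we use here that disk graphs are closed under induced subgraphs). This yields at least $\binom{k}{2}$ ``inter-cluster'' edges in $H$.

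From here the argument is purely combinatorial. The sum of degrees in $H$ is at least $2\binom{k}{2} = k(k-1)$, so by averaging, some vertex of $H$ has degree at least $k(k-1)/n$. Plugging in $k > n/\log^2 n$ gives
\[
\frac{k(k-1)}{n} \;\geq\; \frac{1}{n}\cdot\frac{n}{\log^2 n}\cdot\left(\frac{n}{\log^2 n}-1\right) \;\geq\; \frac{n}{\log^4 n},
\]
where the last inequality holds for all sufficiently large $n$ (absorbing the $-1$ term into a slightly weaker constant, or sharpening the choice of $k$).

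I expect the only subtle step to be the passage from the arbitrary odd cycles returned by $\ocp(H)$ to \emph{induced} odd cycles on the same vertex sets: without this refinement, Theorem~\ref{thm:main-structural-non-disk} cannot be applied, because that theorem forbids two \emph{induced} vertex-disjoint odd cycles with no edge between them in $\overline{H}$'s complement. Once the shortest-odd-cycle trick is in place, the rest is a one-line double counting.
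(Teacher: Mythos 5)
Your proof is correct, and it uses the same key ingredient as the paper (Theorem~\ref{thm:main-structural-non-disk} plus closure of disk graphs under induced subgraphs, forcing an edge between every two of the packing's cycles), but the counting is organized differently. The paper first pigeonholes to find one cycle $C$ of the packing with at most $\log^2 n$ vertices, observes that $C$ receives an edge from each of the other $>n/\log^2 n$ cycles, and then pigeonholes again over the $\leqslant \log^2 n$ vertices of $C$; you instead count all $\binom{k}{2}$ inter-cycle edges and average over all $n$ vertices. Both arguments are one-line double counts and give the same threshold up to an additive term below $1$ (your bound is $k(k-1)/n \geqslant n/\log^4 n - 1/\log^2 n$, and the paper's own bookkeeping is equally loose, e.g.\ it writes ``at least $n/\log^2 n$ edges'' for $|\mathcal C|-1$ of them; this slack is immaterial for the use of the lemma in Theorem~\ref{thm:qptas}). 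A nice point in your write-up is that you make explicit the passage from the (not necessarily induced) cycles of the packing to chordless odd cycles via shortest odd cycles in $H[V(C_i)]$; the paper states the consequence of Theorem~\ref{thm:main-structural-non-disk} as ``no two vertex-disjoint odd cycles with no edges between them'' without spelling out this refinement, which is exactly the step you supply.
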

\begin{proof}
  Consider a maximum odd cycle packing $\mathcal{C}$.
  By assumption, it contains more than $n/\log^2 n$ vertex-disjoint cycles. 
By the pigeonhole principle, there must be a cycle $C \in \mathcal{C}$ of size at most $\log^2 n$.
Now, by Theorem \ref{thm:main-structural-non-disk}, $H$ has no two vertex-disjoint odd cycles with no edges between them. Therefore there must be an edge from $C$ to every other cycle of $\mathcal{C}$, there are at least $n / \log^2 n$ such edges. Let $v$ be a vertex of $C$ with the maximum number of edges to other cycles in $\mathcal{C}$, by the pigeonhole principle its degree is at least $n / \log^4 n$.
\end{proof}

Now we are ready to construct a QPTAS for \cli in disk graphs.

\begin{theorem}\label{thm:qptas}
For any $\varepsilon > 0$, \cli can be $(1+\varepsilon)$-approximated in time $2^{O(\log^5 n)}$, when the input is a disk graph with $n$ vertices.
\end{theorem}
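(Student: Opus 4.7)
The plan is to reduce \cli on the input disk graph $G$ to \mis on its complement $H := \overline{G}$, and to design a recursive branch-and-bound procedure whose leaves invoke Bock et al.'s PTAS (Theorem~\ref{thm:bock-ptas}). Concretely, given a co-disk graph $H$ on $n$ vertices, proceed as follows. If no vertex of $H$ has degree at least $n/\log^4 n$, then by Lemma~\ref{lem:bigdeg} we have $\ocp(H) \leq n/\log^2 n = o(n/\log n)$, so Theorem~\ref{thm:bock-ptas} returns a $(1+\varepsilon)$-approximation in polynomial time. Otherwise, pick a vertex $v$ with $\deg_H(v) \geq n/\log^4 n$ and branch on whether $v$ belongs to the optimum IS: in the \emph{include} branch, recursively solve on $H \setminus N_H[v]$ (whose size is at most $n(1 - 1/\log^4 n)$) and add $1$; in the \emph{exclude} branch, recursively solve on $H \setminus \{v\}$; return the maximum of the two values. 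Subgraphs of constant size are handled by brute force.

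Correctness is a clean induction on $|V(H)|$: the optimum IS either contains $v$ (so $\mathrm{OPT} \setminus \{v\}$ is an IS in $H \setminus N_H[v]$) or does not (so $\mathrm{OPT}$ is an IS in $H \setminus \{v\}$), and taking the maximum preserves the $(1+\varepsilon)$ guarantee inherited from Bock's PTAS at the leaves. Crucially, the hypothesis that the current graph is the complement of a disk graph is preserved at every level because disk graphs are closed under taking induced subgraphs, so Lemma~\ref{lem:bigdeg} can be applied at each recursive call. Small constant-factor slack in $\varepsilon$ across the $O(\log^5 n)$ levels is absorbed by rescaling.

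The principal obstacle is the running-time analysis, namely bounding the total size of the recursion tree by $2^{O(\log^5 n)}$. The key combinatorial observation is that along any root-to-leaf path the number of \emph{include} branches is $O(\log^5 n)$: after $k$ include steps the current size is at most $n(1 - 1/\log^4 n)^k \leq n \cdot \exp(-k/\log^4 n)$, which falls below the constant base-case threshold once $k = \Omega(\log n \cdot \log^4 n) = \Omega(\log^5 n)$. Exclude steps, in contrast, remove only one vertex each, but they can only occur while some high-degree vertex still exists, and exhausting a path with mostly excludes still forces the include count to control the residual size. A careful counting argument over the possible interleavings of include and exclude decisions — exploiting that the size of a subproblem at depth $d$ in the tree is a deterministic function of its include/exclude signature — then bounds the number of distinct leaves, and multiplying by the polynomial per-call work (dominated by the cost of Bock's PTAS) gives the claimed running time of $2^{O(\log^5 n)}$. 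This counting step is where I expect the real technical work to lie; everything else follows from standard branch-and-bound reasoning together with Lemma~\ref{lem:bigdeg} and Theorem~\ref{thm:bock-ptas}.
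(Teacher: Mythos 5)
Your algorithm and its correctness argument are exactly the paper's: test whether $\overline{G}$ has a vertex of degree at least $n/\log^4 n$, run the PTAS of Bock et al.\ via Lemma~\ref{lem:bigdeg} if not, and otherwise branch on such a vertex (the branching is exact, and co-disk graphs are closed under induced subgraphs, so Lemma~\ref{lem:bigdeg} remains applicable; note also that no slack in $\varepsilon$ accumulates across levels, since the approximation is invoked only once, at a leaf). The paper's running-time analysis is simply ``solve the recursion $F(n)\leqslant F(n-1)+F(n-n/\log^4 n)$''.

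The genuine gap is precisely the counting step you defer. The facts you do establish --- at most $O(\log^5 n)$ include steps on any root-to-leaf path, while each exclude step removes a single vertex and there can be up to $\Omega(n)$ of them --- only yield a leaf bound of $\sum_{k\leqslant O(\log^5 n)}\binom{n+O(\log^5 n)}{k}=2^{O(\log^6 n)}$, and no interleaving count based on these facts alone can do better: with the degree threshold taken relative to the \emph{current} instance size (as in your recursion, where the include child has size at most $n(1-1/\log^4 n)$), every word with $\Theta(\log^5 n)$ includes among its first $\sqrt{n}$ steps keeps the size above $1$, so the tree can genuinely contain $\binom{\sqrt{n}}{\Theta(\log^5 n)}=2^{\Theta(\log^6 n)}$ nodes. (Your claim that the subproblem size is a deterministic function of the include/exclude signature is also not quite right, since an include removes $|N[v]|$ vertices, not exactly the threshold, but that is minor.) To obtain the stated $2^{O(\log^5 n)}$ one should keep the threshold fixed at $n/\log^4 n$ with respect to the \emph{original} $n$ throughout the branching phase, so that every include step deletes at least $n/\log^4 n$ vertices, each path has only $O(\log^4 n)$ includes, and the number of leaves is at most roughly $\binom{n}{O(\log^4 n)}=2^{O(\log^5 n)}$ --- this is the reading under which the paper's recursion gives the claimed bound; one must then still check that the PTAS of Theorem~\ref{thm:bock-ptas} applies at the leaves, where Lemma~\ref{lem:bigdeg} is stated in terms of the current (smaller) instance. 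As written, your proposal establishes a quasi-polynomial bound of the form $2^{O(\log^6 n)}$, i.e.\ a QPTAS, but not the exponent claimed in the theorem.
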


\begin{proof}
Let $G$ be the input disk graph and let $\overline{G}$ be its complement, we want to find a $(1+\varepsilon)$-approximation for \mis in $\overline{G}$. We consider two cases.
If $\overline{G}$ has no vertex of degree at least $n / \log ^4 n$, then, by Lemma \ref{lem:bigdeg}, we know that $\ocp(\overline{G}) \leqslant n / \log^2 n = o(n / \log n)$. In this case we run the PTAS of Bock et al. and we are done.

In the other case, $\overline{G}$ has a vertex $v$ of degree at least $n / \log ^4 n$ (note that it may still be the case that $\ocp(\overline{G}) = o(n / \log n)$). We branch on $v$: either we include $v$ in our solution and remove it and all its neighbors, or we discard $v$. The complexity of this step is described by the recursion $F(n) \leqslant F(n-1) + F(n- n / \log^4 n)$ and solving it gives us the desired running time. Note that this step is exact, i.e., we do not lose any solutions.
\end{proof}

\subsection{Subexponential algorithm}

Now we will show how our structural result can be used to construct a subexponential algorithm for \cli in disk graphs.
The \emph{odd girth} of a graph is the size of a shortest odd cycle.
An \emph{odd cycle cover} is a subset of vertices whose deletion makes the graph bipartite.
We will use a result by Györi et al. \cite{Gyori97}, which says that graphs with large odd girth have small odd cycle cover.
In that sense, it can be seen as relativizing the fact that odd cycles do not have the Erd\H{o}s-P\'osa property. Bock et al. \cite{Bock14} turned the non-constructive proof into a polynomial-time algorithm.
\begin{theorem}[Györi et al. \cite{Gyori97}, Bock et al. \cite{Bock14}]\label{thm:occ}
Let $H$ be a graph with $n$ vertices and no odd cycle shorter than $\delta n$ ($\delta$ may be a function of $n$).
Then there is an odd cycle cover $X$ of size  at most $(48/\delta) \ln (5/\delta)$
Moreover, $X$ can be found in polynomial time.
\end{theorem}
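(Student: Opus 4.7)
The plan is to exploit the standard observation that large odd girth forces BFS-trees to be locally bipartite, and then to extract a small odd cycle cover by iteratively removing thin BFS-layers that act as separators.

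First I would establish the following structural fact: for any vertex $v\in V(H)$, the subgraph of $H$ induced by the vertices at BFS-distance strictly less than $\lceil \delta n /2 \rceil$ from $v$ is bipartite. Indeed, any edge $uw$ inside a BFS-layer at depth $d$ or joining two vertices at depth $d$ (the non-tree-edges landing within or between neighboring layers), together with the two tree-paths from $v$ to $u$ and $v$ to $w$, yields a closed walk of length at most $2d+1$; within such a walk an odd closed walk of length $\le 2d+1$ must be present, hence an odd cycle of length $\le 2d+1$. The hypothesis that the odd girth is at least $\delta n$ therefore forbids any such non-tree-edge for $d<\lfloor (\delta n-1)/2\rfloor$, so coloring by the parity of the BFS-depth is a proper 2-coloring of the shallow layers.

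Next I would build an odd cycle cover iteratively. Pick any vertex $v$ in the current graph, run BFS, and look at the window of layers $L_0,\ldots,L_{\lfloor \delta n / 2\rfloor}$; by the pigeonhole principle, one of these layers has size at most $2/\delta$. Any single BFS-layer is a vertex separator whose removal disconnects vertices of smaller depth from vertices of larger depth. Among the ``thin'' layers in the window, I would pick one whose removal is (approximately) \emph{balanced}, in the sense that the smaller side (the ball around $v$, which is bipartite by the structural fact above) contains a constant fraction of the vertices. Adding this thin layer to the cover $X$ peels off a bipartite chunk and leaves a residual graph that is a constant factor smaller. Iterating this on the remaining graph gives a recursion of the form $T(n) \le T(\alpha n) + O(1/\delta)$ for some $\alpha<1$, which produces the $O((1/\delta)\log(1/\delta))$ bound in the statement; the explicit constants $48$ and $5$ come out of optimizing the window width and the balance parameter.

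Finally, for the algorithmic version I would note that each step is polynomial: BFS, pigeonhole choice of a thin layer, and a binary search for the balanced cut all run in polynomial time, and the number of iterations is $O(\log(1/\delta))$, so $X$ is produced in polynomial time. The main obstacle I expect is the balancing step: the naive ``remove any thin layer and recurse'' argument gives only the $O(n/\delta)$ bound without the logarithmic dependence, so the delicate point is to argue that one can always find a thin separator in the BFS window that chops off a constant fraction of the graph (and to handle multiple connected components or vertices with no sufficiently balanced choice, by restarting BFS from a new root inside the residual graph).
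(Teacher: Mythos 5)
Your first step is fine: in a BFS from $v$, any edge inside a layer at depth $d$ closes an odd closed walk of length at most $2d+1$, so the ball of radius roughly $\delta n/2$ is bipartite via the parity $2$-colouring. The genuine gap is the balancing step, which you yourself flag but do not resolve, and which in fact cannot be resolved in the form you need: it is simply not true that the BFS window of depth $\delta n/2$ always contains a thin layer whose removal cuts off a constant fraction of the vertices. Take, say, about $1/\delta$ vertex-disjoint odd cycles of length $\delta n$ (odd girth exactly $\delta n$), or any bounded-degree graph in which balls of radius $\delta n/2$ contain $o(n)$ vertices: every layer in the window separates off at most a $\delta$-fraction of the graph, so the recursion $T(n)\leqslant T(\alpha n)+O(1/\delta)$ with constant $\alpha<1$ never gets off the ground, and "restart BFS from a new root" does not help. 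Without it, your scheme only yields the trivial accounting (up to $\sim 1/\delta$ peeling rounds, each charging up to $\sim 2/\delta$ removed vertices), which does not give the claimed $O((1/\delta)\ln(1/\delta))$ bound.

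Note also that the paper does not prove this theorem; it imports it from Gy\H{o}ri--Kostochka--\L uczak and from Bock et al., and their argument replaces your balance requirement by an \emph{amortized} one. Work in the window of radii between about $\delta n/4$ and $\delta n/2$. If the component of $v$ has radius below the window, it is entirely bipartite and is set aside at no cost. Otherwise all layers up to depth $\delta n/4$ are nonempty, so the ball already has at least $\delta n/4$ vertices when the window starts; since the telescoping product of the ratios $|B_{r+1}|/|B_r|$ over the window is at most $n/(\delta n/4)=4/\delta$, some radius $r$ in the window satisfies $|L_{r+1}|\leqslant O\bigl(\tfrac{1}{\delta n}\ln\tfrac{1}{\delta}\bigr)\cdot|B_r|$. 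One deletes that layer, sets aside the bipartite ball $B_r$, and recurses on the rest; charging each deleted layer to the (pairwise disjoint) balls set aside, whose total size is at most $n$, gives the $(48/\delta)\ln(5/\delta)$ bound, and each round is just a BFS plus a minimization, hence polynomial. So the missing idea is precisely this ratio/charging argument over a window where the ball is already of size $\Omega(\delta n)$ — that is where $\ln(1/\delta)$, rather than a balanced cut or a $\ln n$ factor, comes from.
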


Let us start with showing three variants of an algorithm.
\begin{theorem} \label{thm:subexp}
  Let $G$ be a disk graph with $n$ vertices. Let $\Delta$ be the maximum degree of $\overline{G}$ and $c$ the odd girth of $\overline{G}$ (they may be functions of $n$).
  \cli has a branching or can be solved, up to a polynomial factor, in time:\\
\begin{enumerate*}[label=(\roman*),itemjoin={\quad}]
\item $2^{\tilde{O}(n/\Delta)}$ (branching), \label{case:subexp-delta}
\item $2^{\tilde{O}(n/c)}$ (solved),		\label{case:subexp-oddgirth}
\item $2^{{O}(c  \Delta)}$ (solved). \label{case:subexp-both}
\end{enumerate*}
\end{theorem}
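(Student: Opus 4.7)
The plan splits naturally by case. For \textbf{case (i)}, I would use the classical branching for \mis on $\overline{G}$: pick a vertex $v$ of maximum degree $\Delta$ and branch on whether $v$ lies in the independent set. Including $v$ removes $N[v]$ (a loss of $\Delta+1$ vertices); excluding removes only $v$. This yields the recurrence $T(n) \leqslant T(n-1) + T(n-\Delta-1)$, whose solution is $2^{O(n\log\Delta/\Delta)} = 2^{\tilde{O}(n/\Delta)}$, with polynomial work per node.

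For \textbf{case (ii)}, I would apply Theorem~\ref{thm:occ} with $\delta = c/n$ to compute, in polynomial time, an odd cycle cover $X$ of $\overline{G}$ of size $(48n/c)\ln(5n/c) = \tilde{O}(n/c)$. Since $\overline{G} - X$ is bipartite, \mis on any subgraph of it is polynomial via K\"onig's theorem. I would then enumerate all $2^{|X|} = 2^{\tilde{O}(n/c)}$ subsets $S \subseteq X$; for each $S$ that is independent in $\overline{G}[X]$, compute the maximum independent set of $(\overline{G} - X) - N(S)$ and add $|S|$. Returning the best value solves \mis on $\overline{G}$ in $2^{\tilde{O}(n/c)}$ time.

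For \textbf{case (iii)}, the structural result of Section~\ref{sec:structural} plays the decisive role. I would find a shortest odd cycle $C$ of $\overline{G}$ in polynomial time (e.g.\ by BFS from every vertex); by definition $|V(C)| = c$, and $|N[C]| \leqslant c(\Delta+1) = O(c\Delta)$. The key claim is that $\overline{G} - N[C]$ is bipartite: otherwise a shortest odd cycle $C'$ of $\overline{G} - N[C]$ would be induced, vertex-disjoint from $C$, and would have no edge to $C$ in $\overline{G}$ (since $N(C)$ was removed), so together with $C$ (itself induced as a shortest odd cycle of $\overline{G}$) it would form an induced copy of $C_c + C_{|V(C')|}$ in $\overline{G}$, contradicting Theorem~\ref{thm:main-structural-non-disk}. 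With bipartiteness in hand, enumerate the $2^{O(c\Delta)}$ subsets of $N[C]$, keep only the independent ones, and for each extend with the maximum independent set of the remaining bipartite graph, computed in polynomial time.

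The delicate point is case (iii): one must verify that both odd cycles invoked in the application of Theorem~\ref{thm:main-structural-non-disk} are genuinely induced and share no edge in $\overline{G}$, which is exactly what the shortest-cycle trick guarantees. The other two cases are essentially bookkeeping once one has the Bock--Gy\"ori machinery and the standard branching for \mis at hand.
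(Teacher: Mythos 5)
Your proposal follows essentially the same route as the paper's proof: the same degree-$\Delta$ branching with recurrence $F(n)\leqslant F(n-1)+F(n-\Delta-1)$ for (i), the same application of Theorem~\ref{thm:occ} with $\delta=c/n$ followed by guessing the solution's intersection with the odd cycle cover $X$ for (ii), and the same use of Theorem~\ref{thm:main-structural-non-disk} to conclude that $\overline{G}-N[C]$ is bipartite and then guessing over $N[C]$ for (iii); your shortest-odd-cycle argument merely makes explicit why the two cycles are induced and non-adjacent, which the paper leaves implicit. The only nuance is that for (i) the paper stresses that this is only a branching rule (the maximum degree may drop during recursion), but since the statement itself labels (i) as ``branching'', your treatment is consistent with it.
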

\begin{proof}
Let $G$ be the input disk graph and let $\overline{G}$ be its complement, we look for a maximum independent set in $\overline{G}$. 

To prove \ref{case:subexp-delta}, consider a vertex $v$ of degree $\Delta$ in $\overline{G}$. We branch on $v$: either we include $v$ in our solution and remove $N[v]$, or discard $v$. The complexity is described by the recursion $F(n) \leqslant F(n-1) + F(n- (\Delta+1))$ and solving it gives \ref{case:subexp-delta}.
Observe that this does not give an algorithm running in time $2^{\tilde{O}(n/\Delta)}$ since the maximum degree might drop.
Therefore, we will do this branching as long as it is \emph{good enough} and then finish with the algorithms corresponding to \ref{case:subexp-oddgirth} and \ref{case:subexp-both}.

For \ref{case:subexp-oddgirth} and \ref{case:subexp-both}, let $C$ be the cycle of length $c$, it clearly can be found in polynomial time. By application of Theorem \ref{thm:occ} with $\delta = c/n$, we find an odd cycle cover $X$ in $\overline{G}$ of size $\tilde{O}(n/c)$ in polynomial time (see for instance \cite{AlonYZ97}). Next we exhaustively guess in time $2^{\tilde{O}(n/c)}$ the intersection $I$ of an optimum solution with $X$ and finish by finding a maximum independent set in the bipartite graph $\overline{G}-(X \cup N(I))$, which can be done in polynomial time. The total complexity of this case is $2^{\tilde{O}(n/c)}$, which shows \ref{case:subexp-oddgirth}.

Finally, observe that the graph $\overline{G} - N[C]$ is bipartite, since otherwise $\overline{G}$ contains two vertex-disjoint odd cycles with no edges between them.
Moreover, since every vertex in $\overline{G}$ has degree at most $\Delta$, it holds that $|N[C]| \leqslant c  (\Delta-1) \leqslant c  \Delta$.
Indeed, a vertex of $C$ can only have $c  (\Delta-2)$ neighbors outside $C$. 
We can proceed as in the previous step: we exhaustively guess the intersection of the optimal solution with $N[C]$ and finish by finding the maximum independent set in a bipartite graph (a subgraph of $\overline{G}-N[C]$), which can be done in total time $2^{O(c  \Delta)}$, which shows \ref{case:subexp-both}.
\end{proof}

Now we show how the structure of $G$ affect the bounds in Theorem \ref{thm:subexp}.

\begin{corollary}\label{cor:subexp}
Let $G$ be a disk graph with $n$ vertices. \cli can be solved in time:
\begin{compactenum}[(a)]
\item $2^{\tilde{O}(n^{2/3})}$,
\item $2^{\tilde{O}(\sqrt{n})}$ if the maximum degree of $\overline{G}$ is constant,
\item polynomial, if both the maximum degree and the odd girth of $\overline{G}$ are constant.
\end{compactenum}
\end{corollary}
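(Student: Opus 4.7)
The plan is to instantiate the three subroutines from Theorem~\ref{thm:subexp} with thresholds chosen so that the branching, the odd-girth-based dynamic programming, and the short-cycle cover strategy all contribute comparable amounts to the running time.

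For (a) I would set $T := n^{1/3}$. While $\overline{G}$ contains a vertex of degree at least $T$, branch using option (i) of Theorem~\ref{thm:subexp}: either include such a vertex in the independent set, removing its closed neighborhood (at least $T+1$ vertices), or discard it, removing one vertex. The resulting recursion $F(n) \leqslant F(n-1) + F(n-T-1)$ has solution $\alpha^n$, where $\alpha$ is the positive root of $\alpha^{T+1} = \alpha^T + 1$; an elementary calculation gives $\alpha = 1 + \Theta(\log T / T)$, so the total number of leaves of the search tree is $2^{\tilde{O}(n/T)} = 2^{\tilde{O}(n^{2/3})}$. At each leaf, the remaining induced subgraph of $\overline{G}$ has maximum degree less than $T$. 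If it happens to be bipartite, solve \mis on it in polynomial time; otherwise compute its odd girth $c$ and run whichever of options (ii) or (iii) is faster, giving $\min(2^{\tilde{O}(n/c)}, 2^{O(cT)})$. The worst case balances at $c \approx \sqrt{n/T} = n^{1/3}$, contributing $2^{\tilde{O}(n^{2/3})}$ per leaf; multiplying by the number of leaves, the total running time remains $2^{\tilde{O}(n^{2/3})}$.

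For (b), since the maximum degree $\Delta$ of $\overline{G}$ is a constant, no initial branching is needed; apply directly option (ii) or (iii), whichever is smaller. Balancing $n/c$ against $c\Delta = O(c)$ at $c = \sqrt{n}$ yields $2^{\tilde{O}(\sqrt{n})}$. For (c), with both $\Delta$ and $c$ constant, option (iii) runs in $2^{O(c\Delta)} = O(1)$ time up to a polynomial overhead, so the total running time is polynomial.

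The only nontrivial verification is the branching analysis in (a): one must check both that the search tree has $2^{\tilde{O}(n/T)}$ leaves (through the standard root-of-$\alpha^{T+1} = \alpha^T + 1$ calculation) and that the per-leaf work is dominated by the same quantity once $T = n^{1/3}$. Once this is in place, parts (b) and (c) are immediate specializations, obtained by simply letting the branching phase be empty and picking the threshold that balances (ii) and (iii) for the remaining parameters.
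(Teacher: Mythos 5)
Your proposal is correct and follows essentially the same win-win strategy as the paper: branch on vertices of $\overline{G}$-degree at least $n^{1/3}$ until none remain, then run the better of options (ii) and (iii) at each leaf, using $\min(n/c,\, c\Delta)\leqslant n^{2/3}$ when $\Delta\leqslant n^{1/3}$, with (b) and (c) obtained by skipping the branching and balancing at $c=\sqrt{n}$, respectively. Your more explicit analysis of the branching recursion and of the per-leaf balance only spells out what the paper leaves implicit.
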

\begin{proof}
  $\Delta$ and $c$ can be computed in polynomial time.
  Therefore, knowing what is faster among cases \ref{case:subexp-delta}, \ref{case:subexp-oddgirth}, and \ref{case:subexp-both} is tractable.
  For case (a), while there is a vertex of degree at least $n^{1/3}$, we branch on it.
  When this process stops, we do what is more advantageous between cases \ref{case:subexp-oddgirth} and \ref{case:subexp-both}.
  Note that $\min (n/\Delta, n/c, c \Delta) \leqslant n^{2/3}$ (the equality is met for $\Delta = c = n^{1/3}$).
  For case (b), we do what is best between cases \ref{case:subexp-oddgirth} and \ref{case:subexp-both}.
  Note that $\min (n/c, c) \leqslant \sqrt{n}$ (the equality is met for $c = \sqrt{n})$.
Finally, case (c) follows directly from case \ref{case:subexp-both} in Theorem \ref{thm:subexp}.
\end{proof}
Observe that case (b) is typically the hardest one for \cli.
Moreover, the win-win strategy of Corollary \ref{cor:subexp} can be directly applied to solve \wcli, as finding a maximum weighted independent set in a bipartite graph is still polynomial-time solvable.
On the other hand, this approach cannot be easily adapted to obtain a subexponential algorithm for \textsc{Clique Partition} (even \textsc{Clique $p$-Partition} with constant $p$), since \textsc{List Coloring} (even \textsc{List $3$-Coloring}) has no subexponential algorithm for bipartite graphs, unless the ETH fails~(see \cite{precoloring}, the bound can be obtained if we start reduction from a sparse instance of {\textsc{1-in-3-Sat} instead of {\textsc{Planar 1-in-3-Sat}).

\section{Other intersection graphs and limits}\label{sec:gen&lim}
In this section, we discuss the impossibility of generalizing our results to related classes of intersection graphs.

\subsection{Filled ellipses and filled triangles}

A natural generalization of a disk is an \emph{elliptical disk}, also called \emph{filled ellipse}, i.e., an ellipse plus its interior.
The simplest convex set with non empty interior is a filled triangle (a triangle plus its interior).  
We show that our approach developed in the two previous sections, and actually every approach, is bound to fail for filled ellipses and filled triangles.

APX-hardness was shown for \cli in the intersection graphs of (\emph{non-filled}) ellipses and triangles by Amb\"uhl and Wagner~\cite{Ambuhl05}.
Their reduction also implies that there is no subexponential algorithm for this problem, unless the ETH fails.
Moreover, they claim that their hardness result extends to filled ellipses since \emph{``intersection graphs of ellipses without interior are also intersection graphs of filled ellipses''}.
Unfortunately, as we show below, this claim is incorrect.

\begin{theorem}\label{thm:counterexample}
There is a graph $G$ which has an intersection representation with ellipses without their interior, but has no intersection representation with convex sets.
\end{theorem}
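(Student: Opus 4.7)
The plan is to exhibit an explicit small graph $G$, draw an intersection representation of $G$ by ellipse \emph{boundaries}, and then prove that $G$ is not the intersection graph of any family of convex sets in the plane.

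For the positive direction, I would use the fact that two ellipse boundaries can cross in up to four points. Taking a pair of ellipses whose boundaries cross in four points partitions each boundary into four arcs that alternate inside and outside of the other ellipse. I then attach a small number of auxiliary ellipse boundaries, each following only one chosen arc, in order to encode a prescribed adjacency pattern. Reading off the resulting intersections defines the graph $G$.

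For the negative direction, I would assume a convex representation $\{C_v\}_{v \in V(G)}$ exists. Every non-edge $uv$ in $G$ then produces a line $\ell_{uv}$ strictly separating $C_u$ from $C_v$. The graph $G$ will be designed so that tracking on which side of these lines each remaining $C_w$ lies forces a parity contradiction, in the exact spirit of the proof of Theorem~\ref{thm:main-structural-non-disk} (where the obstruction is that a closed curve crosses any line an even number of times, while the prescribed intersection pattern would require an odd count). The non-convexity of ellipse boundaries rescues the ellipse-boundary realization, since a boundary can cross a separating line more freely than the two convex bodies it separates.

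The main obstacle is twofold. First, the graph $G$ must be simple enough that an ellipse-boundary drawing can be exhibited and checked concretely, yet structured enough that the convex-set separation argument closes without escape. Second, we need to identify the right combinatorial invariant --- almost certainly a parity argument analogous to the one in Section~\ref{subsec:notco-disk} --- that a convex representation would violate. Once the invariant is chosen, the non-realization reduces to routine counting while the ellipse-boundary realization is an explicit picture.
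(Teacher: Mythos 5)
There is a genuine gap here: what you give is a plan in which both hard ingredients --- the concrete graph $G$ and the invariant that defeats convex representations --- are left unchosen, and the one invariant you do propose is the wrong one. You suggest a parity argument ``in the exact spirit'' of the two-disjoint-odd-cycles theorem, driven by separating lines for non-edges. But that parity argument is specific to disks: it rests on the $K_{2,2}$ lemma about the four \emph{centers} in convex position and the resulting line/segment crossing condition, and general convex sets have no centers to run it on. Worse, the class of convex-set intersection graphs provably does not satisfy any such obstruction: the paper's own constructions (Lemmas~\ref{lem:triangles-co2subd} and~\ref{lem:ellipses-co2subd}) realize \emph{every} co-2-subdivision by filled triangles or filled ellipses, and the co-2-subdivision of $C_3+C_3$ is $\overline{C_9+C_9}$ --- exactly the kind of graph your parity invariant would be trying to exclude. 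So a complement of two disjoint odd cycles is a convex-set intersection graph, and the ``closed curve crosses a line an even number of times'' counting cannot be the mechanism separating ellipse boundaries from convex sets.

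The paper's actual argument is of a different nature (Kratochv\'il--Matou\v sek style). The graph has an induced cycle of white vertices whose representatives contain a closed curve, gray attachment vertices whose representatives must meet that curve in a prescribed cyclic order, three pairwise adjacent black vertices $a,b,c$ whose convex representatives are then exposed along the boundary of their union in the alternating pattern $a,c,b,c,a,c,b$, forcing $R_a\cap R_b\cap R_c\neq\emptyset$, and finally a vertex $d$ that is a non-adjacent twin of $c$. Segments joining points of $R_d$ in the four regions cut out by the gray attachments must, by convexity, cross $R_a$ or $R_b$ and then either a forbidden gray set or $R_c$, contradicting a non-adjacency. The feature that rescues the ellipse-boundary realization is not that a boundary ``crosses a separating line more freely,'' but that two ellipse boundaries can be disjoint while nested (e.g.\ concentric), so the twins $c$ and $d$ can both meet the same sets threading the annulus between them --- something convexity forbids in the forced configuration. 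Your proposal, as written, neither exhibits such a graph nor identifies this convexity obstruction, so the negative direction does not close.
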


\begin{proof}
The argument is similar to the one used by Brimkov et al. \cite{DBLP:journals/corr/BrimkovJKKPRST14}, which was in turn inspired by the construction by Kratochv\'il and Matou\v{s}ek \cite{DBLP:journals/jct/KratochvilM94}. 
Consider the graph $G$ in Figure~\ref{fig:counterexample2} (containing what we will henceforth call \textit{black}, \textit{gray} and \textit{white} vertices), and observe that $c$ and $d$ are two non-adjacent vertices with the same neighborhoods.

\begin{figure}[ht]
\centering
\begin{center}
\tiny
\begin{tikzpicture}[scale=0.72]
\tikzstyle{every node}=[draw, shape = circle]

\node (o1) at (0,3) {};
\node (o2) at (1,2.6) {};
\node (o3) at (2,2) {};
\node (o4) at (2.6,1) {};
\node (o5) at (3,0) {};
\node (o6) at (2.6,-1) {};
\node (o7) at (2,-2) {};
\node (o8) at (1,-2.6) {};
\node (o9) at (0,-3) {};
\node (o10) at (-1,-2.6) {};
\node (o11) at (-2,-2) {};
\node (o12) at (-2.6,-1) {};
\node (o13) at (-3,0) {};
\node (o14) at (-2.6,1) {};
\node (o15) at (-2,2) {};
\node (o16) at (-1,2.6) {};

\node[fill = black, inner sep=-0.19cm] (a) at (0,0) {\color{white}{\large $c$}}; 
\node[fill = black, inner sep=-0.24cm] (ap) at (1,-0.3) {\color{white}{\large $d$}}; 
\node[fill = black, inner sep=-0.23cm] (b) at (-1,-0.5) {\color{white}{\large $b$}}; 
\node[fill = black, inner sep=-0.20cm] (c) at (0,1) {\color{white}{\large $a$}}; 

\node[fill = gray] (a3) at (1.2,1.2) { };
\node[fill = gray] (a7) at (1.2,-1.2) { };
\node[fill = gray] (a11) at (-1.2,-1.2) { };
\node[fill = gray] (a15) at (-1.2,1.2) { };
\node[fill = gray] (b1) at (-0.5,1.5) { };
\node[fill = gray] (b9) at (-0.5,-1.5) { };
\node[fill = gray] (c5) at (1.5,0.5) { };
\node[fill = gray] (c13) at (-1.5,0.5) { };
\draw (o1)--(o2)--(o3)--(o4)--(o5)--(o6)--(o7)--(o8)--(o9)--(o10)--(o11)--(o12)--(o13)--(o14)--(o15)--(o16)--(o1);
\draw (a) -- (a3) -- (o3);
\draw (a) -- (a7) -- (o7);
\draw (a) -- (a11) -- (o11);
\draw (a) -- (a15) -- (o15);

\draw (ap) -- (a3) -- (o3);
\draw (ap) -- (a7) -- (o7);
\draw (ap) -- (a11) -- (o11);
\draw (ap) -- (a15) -- (o15);

\draw (b) -- (b1) -- (o1);
\draw (b) -- (b9) -- (o9);
\draw (c) -- (c5) -- (o5);
\draw (c) -- (c13) -- (o13);
\draw (a) -- (b) -- (c) -- (a);
\draw (ap) -- (b) -- (c) -- (ap);
\end{tikzpicture}
\hskip 14 pt
\begin{tikzpicture}[scale=0.45]
\draw (0,0) ellipse (2.5 and 2.5);
\draw (0,0) ellipse (2.2 and 2.2);

\draw (0,0) ellipse (3.1 and 0.9);
\draw (0,0) ellipse (0.9 and 3.1);

\node at (0.7,3) {\large $a$};
\node at (-1.87,0.0) {\large $c$};
\node at (2.5,1.6) {\large $d$};
\node at (3,0.8) {\large $b$};

\draw (-3.75,4.5) ellipse (2.5 and 0.75);
\draw (3.75,4.5) ellipse (2.5 and 0.75);
\draw (3.75,-4.5) ellipse (2.5 and 0.75);
\draw (-3.75,-4.5) ellipse (2.5 and 0.75);
\draw (0,6.5) ellipse (1.75 and 0.75);
\draw (0,-6.5) ellipse (1.75 and 0.75);
\draw (-6.5,0) ellipse (0.75 and 2.25);
\draw (6.5,0) ellipse (0.75 and 2.25);
\draw (1.5, 5.5) ellipse (0.5 and 1);
\draw (1.5, -5.5) ellipse (0.5 and 1);
\draw (-1.5, 5.5) ellipse (0.5 and 1);
\draw (-1.5, -5.5) ellipse (0.5 and 1);
\draw (6, 3) ellipse (0.5 and 1.5);
\draw (6, -3) ellipse (0.5 and 1.5);
\draw (-6, 3) ellipse (0.5 and 1.5);
\draw (-6, -3) ellipse (0.5 and 1.5);
\draw (0, 4.5) ellipse (0.25 and 1.75);
\draw (0, -4.5) ellipse (0.25 and 1.75);
\draw (4.5, 0) ellipse (1.75 and 0.25);
\draw (-4.5, 0) ellipse (1.75 and 0.25);
\draw (1.75, 2.7) ellipse (0.15 and 1.7);
\draw (-1.75, 2.7) ellipse (0.15 and 1.7);
\draw (1.75, -2.7) ellipse (0.15 and 1.7);
\draw (-1.75, -2.7) ellipse (0.15 and 1.7);
\end{tikzpicture}

\end{center}
\caption{A graph (left), which has a representation with empty ellipses (right) but no representation with convex sets.}
\label{fig:counterexample2}
\end{figure}
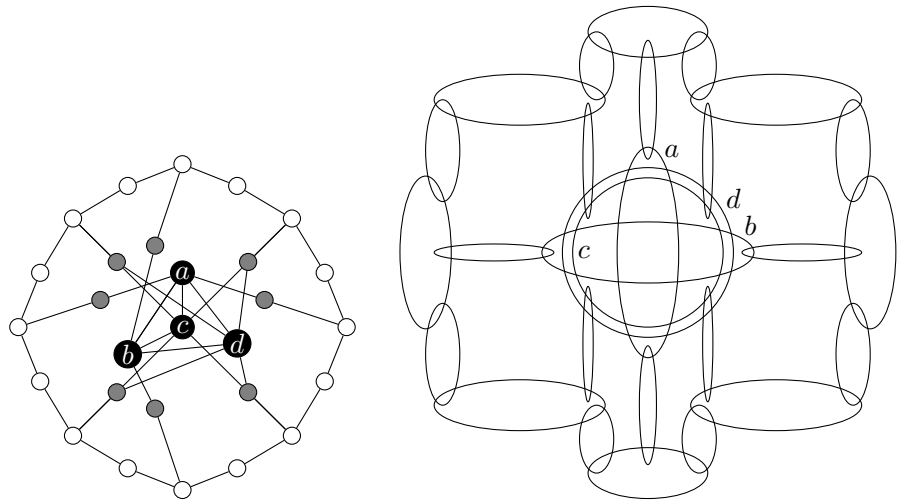

Suppose $G$ can be represented by intersecting convex sets. For a vertex $v$, let $R_v$ be the convex set representing $v$.
The union of representatives of the white vertices contains a closed Jordan curve, that we will call the outer circle. Let us choose the outer circle in such a way that it intersects the representatives of all gray vertices. It divides the plane into two faces -- an interior and an exterior.

The outer circle cannot be crossed by the representative of any black vertex. Moreover, as black vertices form a connected subgraph, they have to be represented in the same face $F$ (with respect to the outer circle). Thus, along this circle the representatives of gray vertices appear in a prescribed ordering (note that they form an independent set). This implies the ordering in which some part of representatives of the black vertices occur.

First, observe that the representatives of the gray neighbors of $a,b$, and $c$ intersect the outer circle in the following ordering: $a_1,c_1,b_1,c_2,a_2,c_3,b_2,c_4$ (where each $z_i$ for $z \in \{a,b,c\}$ is a distinct gray neighbor of $z$).

Clearly, each gray neighbor of $a$ must intersect $R_a$ outside $R_a \cap (R_b \cup R_c)$, each gray neighbor of $b$ must intersect $R_b$ outside $R_b \cap (R_a \cup R_c)$, and each gray neighbor of $c$ must intersect $R_c$ outside $R_c \cap (R_a \cup R_b)$.  Thus, some parts of $R_a$, $R_b$, and $R_c$ are exposed (i.e., outside the intersection with the union of representatives of remaining two vertices) in the ordering: $a,c,b,c,a,c,b$, as we move along the boundary of $R_a \cup R_b \cup R_c$. Note that this implies that $R_a \cap R_b \cap R_c \neq \emptyset$, since all sets are convex.

For any $z \in \{a,b,c\}$ and any $i$, the set $R_{z_i}$ contains a segment $s(z_i)$, whose one end is on the boundary of $R_z$ and the other end is on the outer circle (recall that all representatives are convex). For $z \in \{a,b\}$ and $i \in \{1,2\}$, by $s'(z_i)$ we denote the segment joining the endpoint of $s(z_i)$ on the boundary of $R_z$ to the closest point in $R_z \cap R_c$.
Now we observe that the set $\bigcup_{z \in \{a,b\}, i \in \{1,2\}} s(z_i) \cup s'(z_i)$ partitions $F \setminus R_c$ into four disjoint regions $Q_1,Q_2,Q_3,Q_4$. Let $Q_1$ be the region adjacent to $s(a_1)$ and $s(b_1)$, $Q_2$ be the region adjacent to $s(b_1)$ and $s(a_2)$, $Q_3$ be the region adjacent to $s(a_2)$ and $s(b_2)$, and $Q_4$ be the region adjacent to $s(b_2)$ and $s(a_1)$. Note that one of these regions may be unbounded, if $F$ is the unbounded face of the outer circle.

For every $i \in \{1,2,3,4\}$, the set  $R_{c_i} \setminus R_c$ is contained in $Q_i$.
For $i = \{1,2,3,4\}$, let $p_i$ be a point in $R_d \cap R_{c_i}$, such a point exist, since $d$ is adjacent to $c_i$.
By convexity of $R_d$, the segment $p_1p_2$ is contained in $R_d$. On the other hand, it crosses the curve $s(b_1) \cup s'(b_1)$, let $q_1$ be the intersection point. Since $R_d$ is disjoint with $R_{b_1}$, clearly $q_1 \in s'(b_1) \subseteq R_b$.
In the analogous way we define $q_2$ to be the crossing point of $p_2p_3$ and $s(a_2) \cup s'(a_2)$, $q_3$ to be the crossing point of $p_3p_4$ and $s(b_2) \cup s'(b_2)$, and $q_4$ to be the crossing point of $p_4p_1$ and $s(a_1) \cup s'(a_1)$. We observe that $q_2 \in s'(a_2) \subseteq R_a$, $q_3 \in s'(b_2) \subseteq R_b$, and $q_4 \in s'(a_1) \subseteq R_a$.
Let us consider the segment $q_1q_3$.
It must intersect  either $s(c_2) \cup R_c$ or $s(c_4) \cup R_c$.
Without loss of generality, we assume that it intersects $s(c_2) \cup R_c$.
Let $q'$ be this intersection point.
By convexity, $q' \in R_d$ and $q' \in R_b$.
If $q' \in s(c_2)$, we get the contradiction with the fact that $b$ and $c_2$ are non-adjacent.
On the other hand, if $q' \in R_c$, we get the contradiction with the fact that $d$ and $c$ are non-adjacent.

Finally, it is easy to represent $G$ with empty ellipses (see Fig. \ref{fig:counterexample2} right).
\end{proof}

This error and the confusion between filled ellipses and ellipses without their interior has propagated to other more recent papers \cite{Keller17}.
Fortunately, we show that the hardness result does hold for filled ellipses (and filled triangles) with a different reduction.
Our construction can be seen as streamlining the ideas of Ambühl and Wagner \cite{Ambuhl05}.
It is simpler and, in the case of (filled) ellipses, yields a somewhat stronger statement.

\begin{theorem}\label{thm:hardness-filled-ellipses}
  There is a constant $\alpha > 1$ such that for every $\varepsilon > 0$, \cli on the intersection graphs of filled ellipses has no $\alpha$-approximation algorithm running in subexponential time $2^{n^{1-\varepsilon}}$, unless the ETH fails, even when the ellipses have arbitrarily small eccentricity and arbitrarily close value of major axis.
\end{theorem}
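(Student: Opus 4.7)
The plan is to prove Theorem~\ref{thm:hardness-filled-ellipses} by a linear-size gap-preserving reduction from \mis on 3-regular graphs, whose APX-hardness and $2^{\Omega(n)}$ ETH lower bound (via the sparsification lemma applied to \textsc{3-SAT}) are standard, to \cli on intersection graphs of filled ellipses satisfying the eccentricity and major-axis conditions. Concretely, given a 3-regular instance $G$ on $n$ vertices and $m=3n/2$ edges, I would let $H$ denote its 2-subdivision and consider $\overline{H}$, the co-2-subdivision. Since a clique in $\overline{H}$ is an independent set in $H$, and since $\alpha(H)=m+\alpha(G)$ with constant gaps preserved, it suffices to realize $\overline{H}$ as the intersection graph of $n+2m=\Theta(n)$ filled ellipses with the desired geometric parameters; an $\alpha$-approximation in time $2^{|V(H)|^{1-\varepsilon}}$ would then refute ETH after routine gap amplification and the observation that $|V(H)|=\Theta(n)$.

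The geometric construction proceeds in two stages. First, place $n$ nearly-circular pairwise-intersecting \emph{vertex-ellipses} $E_1,\ldots,E_n$ with centers on a small common disk so that they form a clique in the intersection graph; the orientation of each major axis is a free parameter. Second, for each edge $uv\in E(G)$, introduce two \emph{subdivision-ellipses} $A_{uv}$ and $B_{uv}$ encoding the two subdivision vertices of $uv$ in $H$. The non-edges of $\overline{H}$ that must be realized are exactly $\{A_{uv},B_{uv}\}$, $\{E_u,A_{uv}\}$, and $\{E_v,B_{uv}\}$; all other pairs must intersect. I would place $A_{uv}$ just outside the boundary of $E_u$ on the side opposite to $E_v$, and orient its major axis so that the thin directional recession along the minor-axis end of $E_u$ is aligned with $A_{uv}$, causing these two to just barely miss; symmetrically for $B_{uv}$ and $E_v$. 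Placing $A_{uv}$ and $B_{uv}$ on opposite ``ends'' of $\seg(c_u,c_v)$ separates them, while keeping both close enough to the common ambient cluster to intersect all remaining ellipses.

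The main obstacle is carrying out this construction uniformly when both the eccentricity $\eta$ and the width $\delta$ of the allowed major-axis window $[\rho,\rho+\delta]$ are prescribed to be arbitrarily small. The key geometric observation is that an ellipse of eccentricity $\eta$ behaves, away from its two minor-axis endpoints, almost exactly like a circle of equal major axis, so the intersections intended between ``far apart'' vertex- and subdivision-ellipses are governed by rough distances and robustly survive. Near each minor-axis endpoint, however, the boundary recedes from the inscribed circle by $\Theta(\eta)$, and this microscopic slack is exactly what allows a carefully positioned neighbor to be excluded. After rescaling the entire configuration so that all ellipses fit in the requested window $[\rho,\rho+\delta]$, the required separations shrink proportionally, but remain nonzero; since only finitely many pairs must be separated, a sufficiently small common tolerance can always be chosen. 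Once such a representation is built, the reduction is polynomial and linear-size, giving the claimed $\alpha$-inapproximability in time $2^{n^{1-\varepsilon}}$ under ETH; the analogous statement for filled triangles follows by replacing each ellipse by a sufficiently flat isosceles triangle inscribed in it with matching axis of symmetry, since all separations and intersections used in the argument are preserved up to lower-order terms.
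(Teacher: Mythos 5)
Your overall route is the same as the paper's: start from the inapproximability of \mis on bounded-degree graphs, pass to the 2-subdivision $H$ and note that \cli on $\overline{H}$ inherits a constant gap, and then realize every co-2-subdivision as an intersection graph of filled ellipses that are almost unit disks. The gap is in the geometric realization, which is the actual content of the theorem, and your sketch of it would fail as described. In the co-2-subdivision, \emph{every} pair of subdivision objects coming from two different edges of $G$ must intersect, while each of the $m$ partner pairs $\{A_{uv},B_{uv}\}$ must not. If you place $A_{uv}$ ``just outside $E_u$ on the side opposite to $E_v$'' and $B_{uv}$ symmetrically, then for two edges $uv$ and $u'v'$ whose relevant endpoints lie on roughly opposite sides of the vertex cluster, $A_{uv}$ and $A_{u'v'}$ (or $A_{uv}$ and $B_{u'v'}$) end up separated for exactly the same reason as the partner pairs; nothing in your construction enforces the $\binom{2m}{2}-m$ required intersections among subdivision ellipses. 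The paper's construction resolves precisely this by routing all $2m$ edge ellipses through one tiny common region near a point $O$, making them tangent there to lines $\ell_k$ of pairwise distinct but tiny slopes, with the two partners of edge $e_k$ tangent to the upward and downward translates $\ell^+_k$ and $\ell^-_k$ of the same line; that is the mechanism that kills exactly the partner pairs and nothing else, and it has no counterpart in your proposal.

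A second underspecified point is how $A_{uv}$ avoids exactly $E_u$ while meeting all $n-1$ other vertex ellipses (and why a degree-3 vertex can accommodate three such avoiding objects). Appealing to the $\Theta(\eta)$ recession of a nearly circular $E_u$ near its minor-axis endpoints is not yet a construction: with all vertex ellipses nearly identical and their centers in a small disk, you must exhibit, for each $u$, a region meeting every $E_w$ with $w\neq u$ but not $E_u$, and make that compatible with the common-region requirement above. The paper engineers this explicitly: the vertex objects are disks through the points $p_i,q_i$ of a strictly convex, nearly flat chain, chosen so that the slightly shifted tangent $\tau^+_i$ at $p_i$ cuts every disk except $\mathcal D_i$, and each edge ellipse incident to $v_i$ is tangent to $\tau^+_i$ (or $\tau^-_j$) as well as to its line $\ell^{\pm}_k$ near $O$; the small-eccentricity and equal-major-axis claims are then obtained by a concrete choice of coordinates (the $60$-degree configuration with unit disks through $O$, $P$, $Q$) plus continuity. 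Without such an arrangement, or an equivalent one, no representation of $\overline{H}$ is produced and the inapproximability claim is not established. (Your closing remark on triangles---replacing each ellipse by a flat isosceles triangle inscribed in it---would also destroy the intersections among the vertex objects, but that concerns the companion theorem, which the paper proves with a separate explicit construction.)
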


This is in sharp contrast with our subexponential algorithm and with our QPTAS when the eccentricity is 0 (case of disks).
For any $\varepsilon > 0$, if the eccentricity is only allowed to be at most $\varepsilon$, a subexponential algorithm or a QPTAS are very unlikely.
This result subsumes \cite{ceroi} (where NP-hardness is shown for connected shapes contained in a disk of radius 1 and containing a concentric disk of radius $1-\varepsilon$ for arbitrarily small $\varepsilon > 0$) and corrects \cite{Ambuhl05}.
We show the same hardness for the intersection graphs of filled triangles.

\begin{theorem}\label{thm:hardness-filled-triangles}
 There is a constant $\alpha > 1$ such that for every $\varepsilon > 0$, \cli on the intersection graphs of filled triangles has no $\alpha$-approximation algorithm running in subexponential time $2^{n^{1-\varepsilon}}$, unless the ETH fails.
\end{theorem}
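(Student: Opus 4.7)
The plan is to transfer the filled-ellipse construction from Theorem~\ref{thm:hardness-filled-ellipses} to filled triangles by replacing each thin ellipse by a very narrow isoceles triangle with the same ``spine.'' A long isoceles triangle of height $h$ and base $\varepsilon \ll h$ behaves, for pairwise intersection testing, like a slightly thickened line segment of length $h$ and thickness $O(\varepsilon)$: two such triangles intersect if and only if their spines come within distance $O(\varepsilon)$ of one another inside the relevant portion of their lengths. Thus, provided the ellipses in the construction for Theorem~\ref{thm:hardness-filled-ellipses} are already thin and in general position, the intersection pattern of the corresponding triangles can be made to coincide with that of the ellipses by choosing the base widths sufficiently small.

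First, I would recall the reduction underlying Theorem~\ref{thm:hardness-filled-ellipses}: a linear-size reduction from a sparse instance of a suitable APX-hard and ETH-hard problem (such as \textsc{Max-3-Sat} with bounded occurrences, or \mis on bounded-degree graphs) to an instance of \cli on intersection graphs of filled ellipses, carrying a fixed constant gap $\alpha>1$. Second, I would replace each filled ellipse by a filled isoceles triangle aligned along its major axis, of the same length and of sufficiently small base width. Since triangles are convex, every convex-geometry argument used in the original gadget analysis applies verbatim; the combinatorial forcing within each gadget depends only on which pairs of shapes intersect, so tuning the base widths guarantees that the intersection graph produced is identical to the one produced by ellipses. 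Third, since the number of triangles is still linear in the size of the source instance, a hypothetical $\alpha$-approximation algorithm running in $2^{n^{1-\varepsilon}}$ would, together with the sparsification lemma, yield a subexponential approximation algorithm for the source problem, contradicting the ETH.

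The main obstacle will be controlling the sharp apices and corners of the triangles, features that ellipses do not have. I would handle this by placing each triangle's three corners in ``safe'' locations, i.e.\ outside any region where an unintended intersection could occur, and then invoking a small general-position perturbation in the spirit of Lemma~\ref{lem:generalPosition} to remove degenerate touchings. Once the configuration is generic, the intersection pattern is stable under taking the base widths to $0$, so only finitely many ``threshold'' widths must be avoided; choosing any smaller base width then yields a valid filled-triangle realisation of the target intersection graph, completing the reduction.
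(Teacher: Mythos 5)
There is a genuine gap: your proof never actually constructs anything, and the perturbation argument it leans on does not work against the construction it would have to be applied to. The paper's filled-ellipse representation behind Theorem~\ref{thm:hardness-filled-ellipses} deliberately uses ellipses that are \emph{nearly unit disks} (that is the whole point of the strengthened statement about eccentricity close to $0$), not thin ellipses; the vertex objects are genuine disks $\mathcal D_i$ through points near $P$ and $Q$, and the edge objects $\mathcal E^{\pm}_k$ are near-unit elliptical disks whose required adjacencies with all-but-one of the $\mathcal D_{i'}$ are realized by very shallow, tangency-type overlaps in a tiny neighbourhood of a single point (the ellipse hugs a tangent line $\tau^+_i$ that cuts slightly into every disk except $\mathcal D_i$). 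Replacing such an ellipse by a thin isosceles triangle ``along its major axis'' destroys exactly these overlaps: shrinking the width can only remove intersections, so the danger is not a finite set of ``threshold'' widths to avoid but the permanent loss of required edges, and no general-position perturbation in the spirit of Lemma~\ref{lem:generalPosition} (which perturbs disk centers, a different setting) can restore them. Your hedge ``provided the ellipses are already thin'' begs the question, since a thin-shape construction is essentially a segment construction, and it is not established (and not at all clear) that complements of 2-subdivisions of bounded-degree graphs are segment intersection graphs; so the conditional premise cannot simply be assumed.

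For comparison, the paper proceeds in the opposite direction. It first proves constant-gap ETH-hardness of \mis on 2-subdivisions (Theorem~\ref{thm:hardness-2subd}, from Theorem~\ref{thm:inapprox-mis}), passes to complements (Corollary~\ref{cor:hardness-co2subd}), and then gives a direct, explicit realization of \emph{every} co-2-subdivision by filled triangles (Lemma~\ref{lem:triangles-co2subd}): vertex triangles $\delta_i=p_iq_ix$ built from a convex chain on a parabola and its mirror image, all sharing the apex $x$ (hence a clique), and for each edge two triangles $\Delta^{\pm}_k$ with one side supported by $\ell(p_{i-1},p_{i+1})$ (resp.\ $\ell(q_{j-1},q_{j+1})$), which meets every $\delta_{i'}$ except $\delta_i$ (resp.\ $\delta_j$), and a nearly horizontal side through a common crossing region. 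The ellipse construction is then obtained \emph{from} the triangle one by turning the two important sides into tangents. Your high-level reduction frame (constant gap from bounded-degree \mis, linear size, ETH transfer) matches the paper, but the geometric core of the theorem --- exhibiting the triangle representation --- is exactly what is missing from your proposal.
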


We first show this lower bound for \textsc{Maximum Weighted Independent Set} on the class of all the 2-subdivisions, hence the same hardness for \textsc{Maximum Weighted Clique} on all the co-2-subdivisions.
It is folklore that from the PCP of Moshkovitz and Raz \cite{Moshkovitz10}, which roughly implies that \textsc{Max 3-SAT} cannot be $7/8+\varepsilon$-approximated in subexponential time under the ETH, one can derive such inapproximability in subexponential time for many hard graph and hypergraph problems; see for instance \cite{Bonnet15}.

The following inapproximability result for \mis on bounded-degree graphs was shown by Chleb\'ik and Chleb\'ikov\'a \cite{Chlebik06}.
As their reduction is almost linear, the PCP of Moshkovitz and Raz boosts this hardness result from ruling out polynomial-time up to ruling out subexponential time $2^{n^{1-\varepsilon}}$ for any $\varepsilon > 0$. 
\begin{theorem}[\cite{Chlebik06,Moshkovitz10}]\label{thm:inapprox-mis}
There is a constant $\beta > 0$ such that \mis on graphs with $n$ vertices and maximum degree $\Delta$ cannot be $1+\beta$-approximated in time $2^{n^{1-\varepsilon}}$ for any $\varepsilon > 0$, unless the ETH fails. 
\end{theorem}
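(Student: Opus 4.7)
The plan is to obtain Theorem~\ref{thm:inapprox-mis} by composing the gap-preserving reduction of Chlebík and Chlebíková \cite{Chlebik06} (which rules out polynomial-time $(1+\beta)$-approximation for bounded-degree \mis under $\mathsf{P}\neq\mathsf{NP}$) with the near-linear-size PCP of Moshkovitz and Raz \cite{Moshkovitz10}, thereby upgrading the conclusion from ``no polynomial-time $(1+\beta)$-approximation'' to ``no $2^{n^{1-\varepsilon}}$-time $(1+\beta)$-approximation'' under the ETH.

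First I would invoke the Moshkovitz--Raz theorem: under the ETH, for every constant $\delta>0$ there is a reduction turning a \textsc{3-SAT} formula on $n$ variables into a \textsc{Label Cover} instance of size $n^{1+o(1)}$ that is either perfectly satisfiable or at most $\delta$-satisfiable, so no algorithm running in time $2^{n^{1-\varepsilon}}$ can distinguish the two cases. Second, I would replay the Chlebík--Chlebíková construction as a linear-size gadget reduction from a bounded-occurrence CSP to bounded-degree \mis, producing a graph of maximum degree $\Delta$ on $O(N)$ vertices with a $1+\beta$ gap between yes- and no-instances of input size $N$. Composing the two ingredients turns a \textsc{3-SAT} formula on $n$ variables into a bounded-degree graph with $N=n^{1+o(1)}$ vertices; a hypothetical $(1+\beta)$-approximation running in time $2^{N^{1-\varepsilon}}$ would then solve \textsc{3-SAT} in time $2^{n^{(1+o(1))(1-\varepsilon)}}\leqslant 2^{n^{1-\varepsilon/2}}$ for $n$ large enough, contradicting the ETH.

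The main obstacle will be the interface between the two pieces: the Moshkovitz--Raz PCP outputs \textsc{Label Cover} over a large (but constant) alphabet, whereas Chlebík--Chlebíková begins from a specific bounded-occurrence constraint satisfaction problem such as \textsc{Max-E3-Sat-B} or \textsc{Max-E3-Lin-2}. To bridge the gap, I would insert a standard linear-size reduction from gap \textsc{Label Cover} to such a bounded-occurrence CSP (for instance via long-code composition or an explicit expander-based degree reduction), and carefully track the blowups to ensure each stage contributes only an $n^{1+o(1)}$ factor. This way the final instance has size $n^{1+o(1)}$, the approximation ratio remains a fixed $1+\beta$, and the exponent in the ETH lower bound $2^{n^{1-\varepsilon}}$ is preserved.
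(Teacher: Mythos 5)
Your proposal is correct and matches the paper's own justification: the paper treats Theorem~\ref{thm:inapprox-mis} as a cited/folklore result, obtained exactly by observing that the Chleb\'ik--Chleb\'ikov\'a reduction to bounded-degree \mis is almost linear, so the near-linear PCP of Moshkovitz and Raz (which rules out subexponential-time gap \textsc{3-SAT} under the ETH) boosts the polynomial-time inapproximability to a $2^{n^{1-\varepsilon}}$ lower bound. Your extra care about the interface (going from gap \textsc{Label Cover} to a bounded-occurrence CSP with only $n^{1+o(1)}$ blowup) is precisely the folklore bookkeeping the paper delegates to the cited references, e.g.\ \cite{Bonnet15}.
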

We could actually state a slightly stronger statement for the running time but will settle for this for the sake of clarity.

\begin{theorem}\label{thm:hardness-2subd}
  There is a constant $\alpha > 1$ such that for any $\varepsilon > 0$, \textsc{Maximum Independent Set} on the class of all the 2-subdivisions has no $\alpha$-approximation algorithm running in subexponential time $2^{n^{1-\varepsilon}}$, unless the ETH fails.
\end{theorem}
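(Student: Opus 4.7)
The plan is to reduce from the ETH-hardness of approximating \mis on bounded-degree graphs given by Theorem~\ref{thm:inapprox-mis}. Starting from an instance $G$ with $n$ vertices, $m \leqslant \Delta n/2$ edges, and maximum degree $\Delta$ (a constant), I would let $H$ be the 2-subdivision of $G$; for every edge $e = uv$ of $G$ denote the two inserted vertices by $a_e, b_e$, producing the length-$3$ path $u$--$a_e$--$b_e$--$v$ in $H$. Then $N := |V(H)| = n + 2m \leqslant (\Delta+1)\, n$, so $H$ remains linear in $n$.

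The core algebraic fact is Poljak's identity $\alpha(H) = \alpha(G) + m$, which I would sketch in both directions. For the lower bound, starting from a maximum independent set $I$ in $G$, for every edge $e = uv$ I select $b_e$ if $u \in I$, $a_e$ if $v \in I$, and either vertex if $u, v \notin I$; the union with $I$ is independent of size $\alpha(G) + m$. For the upper bound, given any independent set $S$ in $H$, the edge $a_eb_e$ forces $|S \cap \{a_e, b_e\}| \leqslant 1$, so $|S \setminus V(G)| \leqslant m$; a local exchange (for every edge $e = uv$ with $u, v \in S$, swap $u$ for $a_e$, which is safe because $a_e, b_e$ are then absent from $S$) preserves $|S|$ and converts $S \cap V(G)$ into an independent set of $G$, giving $|S| \leqslant \alpha(G) + m$. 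The exchange is constructive, so from any independent set $S$ in $H$ one extracts in polynomial time an independent set of $G$ of size at least $|S| - m$.

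I would then propagate the inapproximability gap. Suppose a $\rho$-approximation algorithm for \mis on the class of 2-subdivisions returns a set $S$ with $|S| \geqslant \alpha(H)/\rho$. The extraction above yields an independent set of $G$ of size at least $|S| - m$, and
\[
\frac{\alpha(G)}{|S| - m} \;\leqslant\; \frac{\rho\, \alpha(G)}{\alpha(G) - (\rho - 1)\, m} \;\leqslant\; \frac{\rho}{1 - (\rho - 1)\,\Delta(\Delta+1)/2},
\]
using the greedy bound $\alpha(G) \geqslant n/(\Delta+1)$ and $m \leqslant \Delta n / 2$. Choosing the universal constant $\alpha := 1 + \rho_0$ with $\rho_0 > 0$ small enough (as a function of $\Delta$ and the $\beta$ of Theorem~\ref{thm:inapprox-mis}) makes the right-hand side at most $1 + \beta$; hence the hypothetical $\alpha$-approximation on 2-subdivisions would contradict that theorem. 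Since $N = O_\Delta(n)$, a $2^{N^{1-\varepsilon}}$-time algorithm on $H$ translates into a $2^{n^{1-\varepsilon'}}$-time algorithm on $G$ for an adjusted $\varepsilon' > 0$, preserving the lower bound.

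The principal (mild) obstacle is the constant bookkeeping: pinning down an explicit $\alpha > 1$ that works uniformly, and making sure the extraction step delivers an approximate independent \emph{set} of $G$ (not merely an approximate value of $\alpha(G)$) in polynomial time. Both points reduce to the locality of the Poljak exchange, so neither is a genuine difficulty.
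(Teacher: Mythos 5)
Your proposal is correct and follows essentially the same route as the paper: reduce from the bounded-degree inapproximability of Theorem~\ref{thm:inapprox-mis}, use the 2-subdivision together with the identity $\alpha(H)=\alpha(G)+m$ (with a polynomial-time exchange argument to extract an independent set of $G$ of size at least $|S|-m$), and close the gap with the bounds $\alpha(G)\geqslant n/(\Delta+1)$ and $m\leqslant \Delta n/2$ plus the linear size blow-up $|V(H)|=n+2m=O_\Delta(n)$. The only difference is cosmetic bookkeeping: the paper fixes the explicit constant $\alpha=1+\tfrac{2\beta}{(\Delta+1)^2}$, whereas you leave $\rho_0$ implicit, which is equally valid since $\Delta$ and $\beta$ are fixed constants.
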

\begin{proof}
Let $G$ be a graph with maximum degree a constant $\Delta$, with $n$ vertices $v_1, \ldots, v_n$ and $m$ edges $e_1, \ldots, e_m$, and let $H$ be its 2-subdivision.
Recall that to form $H$, we subdivided every edge of $G$ exactly twice.
These $2m$ vertices in $V(H) \setminus V(G)$, representing edges, are called \emph{edge vertices} and are denoted by $v^+(e_1), v^-(e_1), \ldots, v^+(e_m), v^-(e_m)$, as opposed to the other vertices of $H$, which we call \emph{original vertices}.
If $e_k=v_iv_j$ is an edge of $G$, then $v^+(e_k)$ (resp. $v^-(e_k)$) has two neighbors: $v^-(e_k)$ and $v_i$ (resp. $v^+(e_k)$ and $v_j$). 

Observe that there is a maximum independent set $S$ which contains exactly one of $v^+(e_k), v^-(e_k)$ for every $k \in [m]$. Indeed, $S$ cannot contain both $v^+(e_k)$ and $v^-(e_k)$ since they are adjacent. On the other hand, if $S$ contains neither $v^+(e_k)$ nor $v^-(e_k)$, then adding $v^+(e_k)$ to $S$ and potentially removing the other neighbor of $v^+(e_k)$ which is $v_i$ (with $e_k=v_iv_j$) can only increase the size of the independent set.
Hence $S$ contains $m$ edge vertices and $s \leqslant n$ original vertices, and there is no larger independent set in $H$.

We observe that the $s$ original vertices is $S$ form an independent set in $G$.
Indeed, if  $v_iv_j=e_k \in E(G)$ and $v_i,v_j \in S$, then neither $v^+(e_k)$ nor $v^-(e_k)$ could be in $S$.
  
Now, assume there is an approximation with ratio $\alpha := 1+\frac{2\beta}{(\Delta+1)^2}$ for \textsc{Maximum Independent Set} on 2-subdivisions running in subexponential time, where $1+\beta > 1$ is a ratio which is not attainable for \textsc{Maximum Independent Set} on graphs of maximum degree $\Delta$ according to Theorem~\ref{thm:inapprox-mis}.
On instance $H$, this algorithm would output a solution with $m'$ edge vertices and $s'$ original vertices.
As we already observed this solution can be easily (in polynomial time) transformed into an at-least-as-good solution with $m$ edge vertices and $s''$ original vertices forming an independent set in $G$. Further, we may assume that  $s'' \geqslant n / (\Delta+1)$ since for any independent set of $G$, we can obtain an independent set of $H$ consisting of the same set of original vertices and $m$ edge vertices. 
Since $m \leqslant n \Delta / 2$ and $s'' \geqslant n / (\Delta+1)$, we obtain $m \leqslant s'' \Delta(\Delta+1)/2$ and $2m/(\Delta+1)^2 \leqslant s''\Delta /(\Delta+1)$.
From $\frac{m+s}{m+s''} \leqslant \alpha$ and $\Delta \geqslant 3$, we have 
\[ s	\leqslant m\cdot \frac{2\beta}{(\Delta+1)^2} + s''\cdot (1+ \frac{2\beta}{(\Delta+1)^2})
	\leqslant s'' (\frac{\Delta\beta}{\Delta+1} + 1 +\frac{2\beta}{(\Delta+1)^2} )
	\leqslant s'' (1+\beta)
\]
  This contradicts the inapproximability of Theorem~\ref{thm:inapprox-mis}.
Indeed, note that the number of vertices of $H$ is only a constant times the number of vertices of $G$ (recall that $G$ has bounded maximum degree, hence $m=O(n)$).
\end{proof}

Recalling that independent set is a clique in the complement, we get the following.
\begin{corollary}\label{cor:hardness-co2subd}
  There is a constant $\alpha > 1$ such that for any $\varepsilon > 0$, \cli on the class of all the co-2-subdivisions has no $\alpha$-approximation algorithm running in subexponential time $2^{n^{1-\varepsilon}}$, unless the ETH fails.
\end{corollary}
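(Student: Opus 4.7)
My plan is to derive this as an immediate consequence of Theorem~\ref{thm:hardness-2subd} by complementation. The key observation is that, by the definition given in the preliminaries, a co-2-subdivision is precisely the complement of a 2-subdivision, and that a clique in a graph $G$ is exactly an independent set in $\overline{G}$; in particular the maximum clique size of $\overline{H}$ equals the maximum independent set size of $H$ for every graph $H$.

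First I would fix $\alpha > 1$ to be the constant provided by Theorem~\ref{thm:hardness-2subd}. Then, assuming for contradiction the existence of an $\alpha$-approximation algorithm $\mathcal{A}$ for \cli on co-2-subdivisions running in time $2^{n^{1-\varepsilon}}$ for some $\varepsilon > 0$, I would build the following $\alpha$-approximation for \mis on 2-subdivisions: given an arbitrary 2-subdivision $H$ on $n$ vertices, compute $\overline{H}$ in polynomial time (which lies in the class of co-2-subdivisions by construction), invoke $\mathcal{A}$ on $\overline{H}$, and return the resulting clique of $\overline{H}$, which is an independent set of $H$ of size at least $\mathrm{opt}(H)/\alpha$.

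Since complementation preserves the vertex count exactly, the reduction blows up neither $n$ nor the approximation ratio, and its total running time is $2^{n^{1-\varepsilon}} + \mathrm{poly}(n) = 2^{O(n^{1-\varepsilon})}$, which falls within the regime excluded by Theorem~\ref{thm:hardness-2subd} under the ETH. There is essentially no obstacle to overcome; the statement is a one-line corollary, and the only things worth verifying are that the complementation step preserves membership in the respective classes (true by definition) and preserves the vertex count and approximation factor (true trivially), so the hardness transfers verbatim.
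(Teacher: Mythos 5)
Your proposal is correct and matches the paper's argument, which derives the corollary in one line from Theorem~\ref{thm:hardness-2subd} by noting that an independent set in a 2-subdivision is a clique in its complement (a co-2-subdivision), with complementation preserving the vertex count and hence the subexponential time bound.
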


For exact algorithms the subexponential time that we rule out under the ETH is not only $2^{n^{1-\varepsilon}}$ but actually any $2^{o(n)}$.

Now, to Theorem~\ref{thm:hardness-filled-ellipses} and Theorem~\ref{thm:hardness-filled-triangles}, it is sufficient to show that intersection graphs of (filled) ellipses or of (filled) triangles contain all co-2-subdivisions.
We start with (filled) triangles since the construction is straightforward.

\begin{lemma}\label{lem:triangles-co2subd}
The class of intersection graphs of filled triangles contains all co-2-subdivisions.
\end{lemma}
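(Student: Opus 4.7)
The plan is to exhibit an explicit filled-triangle representation of $\overline{H}$, the complement of the 2-subdivision of an arbitrary graph $G$. Writing $v_1,\ldots,v_n$ for the vertices of $G$ and $v^+(e_k),v^-(e_k)$ for the two subdivision vertices of each edge $e_k=v_iv_j$, recall that the non-adjacencies of $\overline{H}$ are exactly the $3m$ pairs $\{v_i,v^+(e_k)\}$, $\{v^+(e_k),v^-(e_k)\}$, $\{v^-(e_k),v_j\}$, so every other pair must be realised by a genuine triangle intersection. In particular, the $n$ original-vertex triangles must form a clique, and the $2m$ edge-vertex triangles must form a clique minus the matching linking partners $v^+(e_k)\leftrightarrow v^-(e_k)$, while each $T^+_k$ (resp.~$T^-_k$) must additionally miss exactly one original triangle.

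I would construct the representation in two stages. First, I would place the originals as $n$ long, thin, needle-like filled triangles $T_1,\ldots,T_n$ in pairwise intersecting position, for instance by giving them a common central region $Q$ and pointing them outward in pairwise distinct directions, so that $\{T_1,\ldots,T_n\}$ forms the required clique and each $T_i$ has a distinguishable ``axis'' pointing in direction $\theta_i$. Second, for each edge $e_k=v_iv_j$ I would place two small filled triangles $T^+_k,T^-_k$ in carefully chosen pockets of the arrangement: $T^+_k$ is positioned so that it crosses every spoke $T_l$ with $l\neq i$ but lies on the far side of $T_i$ to avoid it, and symmetrically $T^-_k$ is placed to avoid only $T_j$. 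The two pockets of a single edge are chosen on opposite sides of the arrangement so that $T^+_k\cap T^-_k=\emptyset$.

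The main obstacle is to enforce simultaneously that (a) each edge-vertex triangle misses its single forbidden original spoke, (b) the two edge-vertex triangles of a common edge are disjoint, and (c) edge-vertex triangles of \textbf{distinct} edges always pairwise intersect: naively placing each $T^{\pm}_k$ in an isolated pocket would violate (c). I would resolve this by elongating each $T^{\pm}_k$ into a thin needle that, besides occupying its local pocket, also passes through a common meeting zone shared by every edge-vertex triangle, while its thinness lets it dodge the single original it must avoid. Because filled triangles can be made arbitrarily long, thin, and pointed, such routing is very flexible, and checking that the resulting intersection pattern matches $\overline{H}$ reduces to elementary planar geometry, which is why the authors announce this direction as straightforward.
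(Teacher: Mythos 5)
Your high-level plan has the same general shape as the paper's construction (original vertices as a fan of thin triangles forming a clique, edge vertices as long thin triangles that hit all originals but one and meet the other edge triangles in a shared area), but the two mechanisms that actually make this work are missing, and as written the plan is internally inconsistent. The crux is your condition (a): you never say how a convex triangle can meet every spoke except $T_i$. With spokes ``pointing outward in pairwise distinct directions'' around a common central region $Q$ contained in all of them, this can even be impossible: an edge triangle $C$ avoiding $T_i$ must avoid $Q$, yet if it meets three spokes whose directions from the hub positively span the plane, convexity forces $C$ to contain a point of $Q$, hence to meet $T_i$ after all. So the directions must be confined to a fan inside a half-plane; and even then, a convex set meeting spokes on both angular sides of $T_i$ contains a segment crossing $T_i$'s sector, so it must slip past the \emph{tip} of $T_i$ while still reaching every other spoke. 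This only works if the tips are placed in strictly convex position: the paper takes $\delta_i = p_i q_i x$ with the tips $p_i$ on a convex (parabolic) chain and gives $\Delta^+_k$ a side supported by the chord line $\ell(p_{i-1},p_{i+1})$, which misses $\delta_i$ (all of $\delta_i$ lies strictly on one side of it) but separates every other tip $p_{i'}$ from the common apex $x$, hence crosses every other $\delta_{i'}$. Appealing to ``needle-like flexibility'' skips precisely this step.

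The second gap is that your fixes for (b) and (c) contradict each other: if every edge triangle passes through one common meeting zone to guarantee (c), then $T^+_k$ and $T^-_k$ both pass through that zone and intersect, destroying (b), which you claimed to get by placing them in opposite pockets; conversely, if the partners of an edge avoid the zone, (c) is no longer guaranteed. The paper resolves this quantitatively rather than by routing: near the common meeting point, the second side of $\Delta^+_k$ and of $\Delta^-_k$ is almost horizontal with slope $\varepsilon k$ -- the \emph{same} slope for the two triangles of edge $e_k$ (parallel and slightly offset, hence disjoint) and a \emph{different} slope for different edges (hence those sides cross close to the common point). Some device of this kind, distinguishing same-edge pairs from different-edge pairs at the meeting area, is indispensable; without it the ``elementary planar geometry'' you defer to cannot simultaneously deliver (b) and (c).
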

  
\begin{proof}
Let $G$ be any graph with $n$ vertices $v_1, \ldots, v_n$ and $m$ edges $e_1,\ldots,e_m$, and $H$ be its co-2-subdivision.  
  We start with $n+2$ points $p_0, p_1, p_2, \ldots, p_n, p_{n+1}$ forming a convex monotone chain.
  Those points can be chosen as $p_i := (i,p(i))$ where $p$ is the equation of a positive parabola taking its minimum at $(0,0)$.
  For each $i \in [0,n+1]$, let $q_i$ be the reflection of $p_i$ by the line of equation $y = 0$.
  Let $x := (n+1,0)$.
  For each vertex $v_i \in V(G)$ the filled triangle $\delta_i := p_iq_ix$ encodes $v_i$.
  Observe that the points $p_0=q_0$, $p_{n+1}$, and $q_{n+1}$ will only be used to define the filled triangles encoding edges. 

  To encode (the two new vertices of) a subdivided edge $e_k=v_iv_j$, we use two filled triangles $\Delta^+_k$ and $\Delta^-_k$. The triangle 
  $\Delta^+_k$ (resp. $\Delta^-_k$) has an edge which is supported by $\ell(p_{i-1},p_{i+1})$ (resp. $\ell(q_{j-1},q_{j+1})$) and is prolonged so that it crosses the boundary of each $\delta_{i'}$ but $\delta_i$ (resp. but $\delta_j$).
  A second edge of $\Delta^+_k$ and $\Delta^-_k$ are parallel and make with the horizontal a small angle $\varepsilon k$, where $\varepsilon> 0$ is chosen so that $\varepsilon m$ is smaller than the angle formed by $\ell(p_0,p_1)$ with the horizontal line.
  Those almost horizontal edges intersect for each pair $\Delta^+_{k'}$ and $\Delta^-_{k''}$ with $k' \neq k''$ intersects close to the same point.
  Filled triangles $\Delta^+_k$ and $\Delta^-_k$ do not intersect.
  See Figure~\ref{fig:triangles-co2subd} for the complete picture.

    It is easy to check that the intersection graph of $\{\delta_i\}_{i \in [n]} \cup \{\Delta^+_k,\Delta^-_k\}_{k \in [m]}$ is $H$.
    The family $\{\delta_i\}_{i \in [n]}$ forms a clique since they all contain for instance the point $x$.
    The filled triangle $\Delta^+_k$ (resp. $\Delta^-_k$) intersects every other filled triangles except $\Delta^-_k$ (resp. $\Delta^+_k$) and $\delta_i$ (resp. $\delta_j$) with $e_k=v_iv_j$.
    
    One may observe that no triangle is fully included in another triangle.
    So the construction works both as the intersection graph of filled triangles \emph{and} triangles without their interior.
    The edge of a $\Delta^+_k$ or a $\Delta^-_k$ crossing the boundary of all but one $\delta_i$, and the almost horizontal edge can be arbitrary prolonged to the right and to the left respectively.
    Thus, the triangles can all be made isosceles. 
   \end{proof}
  
    \begin{figure}[h!]
      \centering
      \begin{tikzpicture}[
          inv/.style={opacity=0},
          dot/.style={fill,circle,inner sep=-0.01cm},
          vert/.style={draw, fill=red, opacity=0.2},
          verta/.style={draw, fill=blue, opacity=0.2},
          vertb/.style={draw, fill=green, opacity=0.2},
          extended line/.style={shorten >=-#1,shorten <=-#1},
          extended line/.default=1cm,
          one end extended/.style={shorten >=-#1},
          one end extended/.default=1cm,
        ]
        
        \def\n{5}
        \def\a{0.08}
        \def\b{0.2}
        \def\c{0.02}
        \def\d{-5}
        \def\e{5}
        \def\f{0.05}
        \coordinate (su) at (\d,\c) ;
        \coordinate (sd) at (\d,-\c) ;
        \coordinate (eu) at (\e,\c) ;
        \coordinate (ed) at (\e,-\c) ;

        \coordinate (su2) at (\d,-0.1) ;
        \coordinate (eu2) at (\e,0.4) ;
        \coordinate (sd2) at (\d,-0.1 - 2 * \c) ;
        \coordinate (ed2) at (\e,0.4 - 2 * \c) ;
        \coordinate (x) at (\n+1,0) ;
        \node[dot] at (x) {} ;
        
        \foreach \i in {1,...,5}{
          \coordinate (p\i) at (\i,\i * \i * \a + \i * \b) ;
          \coordinate (q\i) at (\i,- \i * \i * \a - \i * \b) ;
          \coordinate (pd\i) at (\i,\i * \i * \a + \i * \b - \f) ;
          \coordinate (qu\i) at (\i,- \i * \i * \a - \i * \b + \f) ;
        }
        \foreach \i in {1,...,5}{
          \node[dot] at (p\i) {} ;
          \node[dot] at (q\i) {} ;
          \draw[vert] (p\i) -- (q\i) -- (x) -- cycle ;
        }

        \path[name path=J1,overlay] (su) -- (eu)--([turn]0:5cm);
        \path[name path=K1,overlay] (pd2) -- (0,0)--([turn]0:5cm);
        \path[name path=J2,overlay] (sd) -- (ed)--([turn]0:5cm);
        \path[name path=K2,overlay] (qu5) -- (qu3)--([turn]0:5cm);

        \path [name intersections={of=J1 and K1,by={I1}}];
        \path [name intersections={of=J2 and K2,by={I2}}];

        \coordinate (e1) at (I1) ;
        \coordinate (e2) at (I2) ;

        \coordinate (c1) at ( $ (pd2)!-2!(e1) $ ) ;
        \coordinate (c2) at ( $ (qu5)!-0.1!(e2) $ ) ;

        \path[name path=J3,overlay] (su2) -- (eu2)--([turn]0:5cm);
        \path[name path=K3,overlay] (pd3) -- (pd1)--([turn]0:5cm);
        \path[name path=J4,overlay] (sd2) -- (ed2)--([turn]0:5cm);
        \path[name path=K4,overlay] (qu4) -- (qu2)--([turn]0:5cm);

        \path [name intersections={of=J3 and K3,by={I3}}];
        \path [name intersections={of=J4 and K4,by={I4}}];

        \coordinate (e3) at (I3) ;
        \coordinate (e4) at (I4) ;
        \coordinate (c3) at ( $ (pd3)!-1.4!(e3) $ ) ;
        \coordinate (c4) at ( $ (qu4)!-0.35!(e4) $ ) ;

         \foreach \i/\j/\k in {1/su/vertb,2/sd/vertb,3/su2/verta,4/sd2/verta}{
            \draw[\k] (\j) -- (c\i) -- (e\i) -- cycle ;
        }
        
      \end{tikzpicture}
      \caption{A co-2-subdivision of a graph with $5$ vertices (in red) represented with triangles. Only two edges are shown: one between vertices $1$ and $4$ (green) and one between vertices $2$ and $3$ (blue).}
      \label{fig:triangles-co2subd}
    \end{figure}

  We use the same ideas for the construction with filled ellipses.
  The two important sides of a triangle encoding an edge of the initial graph $G$ become two tangents of the ellipse.  

  \begin{lemma}\label{lem:ellipses-co2subd}
    The class of intersection graphs of filled ellipses contains all co-2-subdivisions.
  \end{lemma}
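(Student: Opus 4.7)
The plan is to mimic the triangle construction of Lemma~\ref{lem:triangles-co2subd} almost verbatim, replacing each of the $n+2m$ filled triangles by a carefully chosen filled ellipse with the same intersection pattern. I would keep the same auxiliary points $p_0,\ldots,p_{n+1}$, $q_0,\ldots,q_{n+1}$, and $x$, and associate with each vertex $v_i$ of $G$ a filled ellipse $\mathcal E_i$ (playing the role of $\delta_i$) and with each edge $e_k=v_iv_j$ a pair $\mathcal E_k^+,\mathcal E_k^-$ (playing the roles of $\Delta_k^+,\Delta_k^-$).

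Each ellipse is specified via the basic fact recalled in the preliminaries: a filled ellipse is determined by two tangent lines together with the two tangent points on them. For $\mathcal E_i$, I would take as tangents the supporting lines of the sides $p_ix$ and $q_ix$ of $\delta_i$, with tangent points chosen close to $p_i$ and $q_i$; the resulting ellipse is inscribed in $\delta_i$ and, after a tiny shift toward $x$, contains $x$ in its interior, so the vertex ellipses still share a common neighbourhood of $x$ and hence pairwise intersect. For each $\mathcal E_k^\pm$, I would use the two \emph{important} lines already used by the triangle $\Delta_k^\pm$, i.e., the line $\ell(p_{i-1},p_{i+1})$ (resp.\ $\ell(q_{j-1},q_{j+1})$) crossing every $\delta_{i'}$ with $i'\neq i$ (resp.\ $i'\neq j$) and the nearly-horizontal line of slope $\varepsilon k$; the tangent points are pushed far apart so that the ellipse is elongated and closely traces both tangent lines over long arcs.

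With these replacements, the positive intersections are inherited from the triangle case: since each $\mathcal E_k^\pm$ hugs its two tangent lines, it crosses every vertex ellipse whose triangle was crossed by the corresponding side of $\Delta_k^\pm$; the vertex ellipses meet pairwise at $x$; and $\mathcal E_k^+$ meets $\mathcal E_{k'}^-$ for $k\neq k'$ near the shared almost-horizontal region of the original picture.

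The main obstacle is enforcing the required \emph{non}-intersections, namely $\mathcal E_k^+\cap \mathcal E_i=\emptyset$, $\mathcal E_k^-\cap \mathcal E_j=\emptyset$, and $\mathcal E_k^+\cap \mathcal E_k^-=\emptyset$, because ellipses bulge on both sides of their tangent lines, whereas a triangle lies on one side only. I would handle this by making every ellipse arbitrarily thin: once the two tangent points are fixed, shrinking the minor axis confines the ellipse into any prescribed neighbourhood of the chord between these points, so it can be forced to stay on the required side of each separating line (for instance, on the side of $\ell(p_{i-1},p_{i+1})$ opposite to $\delta_i$, and on the chosen side of the almost-horizontal tangent to separate $\mathcal E_k^+$ from $\mathcal E_k^-$). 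Since only finitely many such separation conditions occur, a single small enough common bound on the minor axes of all $n+2m$ ellipses makes the construction work simultaneously, yielding the desired co-$2$-subdivision as the intersection graph of filled ellipses.
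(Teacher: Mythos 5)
There is a genuine gap, and it lies in the geometric heart of the adaptation: a filled ellipse tangent to two non-parallel lines at two prescribed points cannot ``closely trace both tangent lines over long arcs''. The conics tangent to $L_1$ at $A$ and to $L_2$ at $B$ form the pencil $L_1L_2=\lambda M^2$, where $M$ is the chord through $A$ and $B$; every ellipse in this pencil lies strictly outside the corner near $L_1\cap L_2$ (it meets the median from the apex to the midpoint of $AB$ only beyond a fixed fraction of its length, the parabola being the extreme case), and as you shrink it, it collapses onto the \emph{chord} $AB$, not onto the pair of lines. This breaks your construction in two places. First, your vertex ellipse $\mathcal E_i$, tangent to $\ell(p_i,x)$ and $\ell(q_i,x)$ at points near $p_i$ and $q_i$, can never contain the apex $x$ (the apex lies on the negative side of $L_1L_2-\lambda M^2$ for every member of the pencil), and no ``tiny shift toward $x$'' repairs this, since the gap to $x$ is a constant fraction of the triangle's size; worse, under your global thinness requirement $\mathcal E_i$ collapses onto the chord $p_iq_i$, and these chords are pairwise disjoint vertical segments, so the vertex ellipses do not even form a clique. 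Second, for the edge gadgets the triangle $\Delta^+_k$ works precisely because it contains a \emph{long segment} of $\ell(p_{i-1},p_{i+1})$ (hitting every $\delta_{i'}$ with $i'\neq i$) \emph{and} a long almost-horizontal segment near the common crossing region; a thin ellipse tangent to these two lines at far-apart points hugs only the chord between the tangency points, whose direction is neither of the two, so the ``inherited'' intersections with the vertex gadgets and with the other edge gadgets are not in fact inherited. Your two requirements --- tracing both tangent lines, and confining the ellipse to a neighbourhood of the chord by shrinking the minor axis --- are mutually exclusive, and the side-separation argument for the non-edges does not rescue the missing positive intersections.

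The paper's proof uses a genuinely different geometry rather than a thin-ellipse imitation of Lemma~\ref{lem:triangles-co2subd}. The vertex objects are \emph{disks} $\mathcal D_i$ through points $p_i,q_i$ clustered near two common locations $P$ and $Q$, which is what makes them a clique (no common point $x$ is used); each edge ellipse $\mathcal E^\pm_k$ is pinned by two tangency conditions, one to the slightly translated tangent $\tau^\pm_i$ near $P$ (or $Q$), which simultaneously yields ``intersects every vertex disk except $\mathcal D_i$'' and the required non-edge, and one to a line $\ell^\pm_k$ through a third location $O$, which handles all edge--edge adjacencies and the non-adjacency of $\mathcal E^+_k$ with $\mathcal E^-_k$. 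The whole picture is a perturbation of three mutually tangent unit disks, so the ellipses are nearly circular with almost equal major axes --- exactly the extra property invoked in Theorem~\ref{thm:hardness-filled-ellipses}, which an approach based on arbitrarily eccentric ellipses could not provide even if it were repaired.
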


  \begin{proof}
    Let $G$ be any graph with $n$ vertices $v_1, \ldots, v_n$ and $m$ edges $e_1,\ldots,e_m$, and $H$ be its co-2-subdivision. 
    We start with the convex monotone chain $p_0, p_1, p_2, \ldots, p_{n-1}, p_n, p_{n+1}$, only the gap between $p_i$ and $p_{i+1}$ is chosen very small compared to the positive $y$-coordinate of $p_0$.
    The disks $\mathcal D_i$ encoding the vertices $v_i \in G$ must form a clique.
    We also take $p_0$ with a large $x$-coordinate.
    For $i \in [0,n+1]$, $q_i$ is the symmetric of $p_i$ with respect to the $x$-axis.
    For each $i \in [n]$, we define $\mathcal D_i$ as the disk whose boundary is the unique circle which goes through $p_i$ and $q_i$, and whose tangent at $p_i$ has the direction of $\ell(p_{i-1},p_{i+1})$.
    It can be observed that, by symmetry, the tangent of $\mathcal D_i$ at $q_i$ has the direction of $\ell(q_{i-1},q_{i+1})$.

    Let us call $\tau^+_i$ (resp. $\tau^-_i$) the tangent of $\mathcal D_i$ at $p_i$ (resp. at $q_i$) very slightly translated upward (resp. downward).
    The tangent $\tau^+_i$ (resp. $\tau^-_i$) intersects every disks $\mathcal D_{i'}$ but $\mathcal D_i$ (see Figure~\ref{fig:disks-all-but-one}).
    Let denote by $p'_i$ (resp. $q'_i$) be the projection of $p_i$ (resp. $q_i$) onto $\tau^+_i$ (resp. onto $\tau^-_i$) 
\begin{figure}
  \centering
  \begin{tikzpicture}[scale=1.7, xscale=-1,
      vert/.style={draw, fill=red, opacity=0.2},
      dot/.style={fill,circle,inner sep=-0.03cm},
    ]
    \def\s{0.5}
    \def\t{0.045}
    \foreach \i in {0,...,3}{
      \coordinate (c\i) at (- \i * \s,0) ;
      \pgfmathsetmacro\r{1+\i * \i * \t} ;
      \draw[vert] (c\i) circle (\r) ;
      \pgfmathsetmacro\a{90.1 - \i * 10}
      \pgfmathsetmacro\x{- \i * \s + \r * cos(\a)}
      \pgfmathsetmacro\y{ \r * sin(\a)}
      \pgfmathsetmacro\ym{- \r * sin(\a)}
      \pgfmathsetmacro\j{1.1 + \i * 0.4}
      \pgfmathsetmacro\jj{2.6 - \i * 0.2}
      \pgfmathtruncatemacro\ipo{\i+1}
      \coordinate (p\i) at (\x, \y) ;
      \node at (\x, \y+0.1) {$p_\ipo$} ;
      \coordinate (e\i) at (\x, \ym) ;
      \node at (\x, \ym-0.1) {$q_\ipo$};
      \path[overlay] (c\i) -- (p\i) -- ([turn]-90:\j cm) node (q\i) {} ;
      \path[overlay] (c\i) -- (p\i) -- ([turn]90:\jj cm) node (qq\i) {} ;
      \path[overlay] (c\i) -- (e\i) -- ([turn]90:\j cm) node (f\i) {} ;
      \path[overlay] (c\i) -- (e\i) -- ([turn]-90:\jj cm) node (ff\i) {} ;
    }
    \node at (c2) {$\mathcal D_3$} ; 
    \draw[blue,very thick] ($ (q2) + (0,-0.02) $) -- ($ (qq2) + (0,0.05) $) ;
    \draw[red,very thick] (c2) circle (1+4*\t) ;

    \node at (-2,1.85) {$\tau^+_3$}; 

    \foreach \i in {0,...,3}{
      \node[dot] at (p\i) {} ;
      \node[dot] at (e\i) {} ;
    }
  \end{tikzpicture}
   \caption{The blue line intersects every red disk but the third one.}
  \label{fig:disks-all-but-one}
\end{figure}
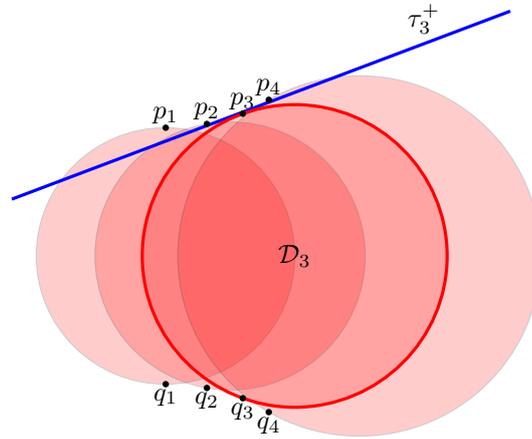
For each $k \in [m]$, let $\ell_k$ be the line crossing the origin $O=(0,0)$ and forming with the horizontal an angle $\varepsilon k$, where $\varepsilon k$ is smaller than the angle formed by $\ell(p_0,p_1)$ with the horizontal.
Let $\ell^+_k$ (resp. $\ell^-_k$) be $\ell_k$ very slightly translated upward (resp. downward). 
  To encode an edge $e_k=v_iv_j$, we have two filled ellipses $\mathcal E^+_k$ and $\mathcal E^-_k$. The ellipse
  $\mathcal E^+_k$ (resp. $\mathcal E^-_k$) is defined as being tangent with $\tau^+_i$ at $p'_i$ (resp. with $\tau^-_j$ at $q'_j$)
  and tangent at $\ell^+_k$ (resp. $\ell^-_k$) at the point of $x$-coordinate $0$ (thus very close to $O$), where $e_k=v_iv_j$.
  The proof that the intersection graph of $\{\mathcal D_i\}_{i \in [n]} \cup \{\mathcal E^+_k,\mathcal E^-_k\}_{k \in [m]}$ is $H$ is similar to the case of filled triangles.

  As no ellipse is fully contained in another ellipse, this construction works for both filled ellipses \emph{and} ellipses without their interior.
  
  We place $p_0$ at $P:=(\sqrt 3/2,1/2)$ and make the distance between $p_i$ and $p_{i+1}$ very small compared to 1.
  All points $p_i$ are very close to $P$ and all points $q_i$ are very close to $Q:=(\sqrt 3/2,-1/2)$.
  This makes the radius of all disks $\mathcal D_i$ arbitrarily close to 1.
  We choose the convex monotone chain $p_0, \ldots, p_{n+1}$ so that $\ell(p_0,p_1)$ forms a 60-degree angle with the horizontal.
As, the chain is strictly convex but very close to a straight-line, $\ell(p_0,p_1) \approx \ell(p_n,p_{n+1}) \approx \ell(p_i,p_{i+1}) \approx \ell(p_i,p_{i+2})$. Thus, all those lines almost cross $P$ and form an angle of roughly 60-degree with the horizontal.
  The same holds for points $q_i$.
  For the choice of an elliptical disk tangent to the $x$-axis at $O$ and to a line with a 60-degree slope at $P$ (resp. at $Q$), we take a disk of radius 1 centered at $(0,1)$ (resp. at $(0,-1)$); see Figure~\ref{fig:almost-disks}.
  
  \begin{figure}[h!]
  \centering
  \begin{tikzpicture}[
      scale=1.4,
      vert/.style={draw, fill=red, opacity=0.2},
      verta/.style={draw, fill=blue, opacity=0.2},
      vertb/.style={draw, fill=blue, opacity=0.2},
      dot/.style={fill,circle,inner sep=-0.03cm}]
      \draw[vertb] (0,0) circle (1) ;
      \node at (0,0) {$\mathcal E^-_k$} ;
      \coordinate (cm) at (0,0) ;
      \draw[verta] (0,2) circle (1) ;
      \node at (0,2) {$\mathcal E^+_k$} ;
      \coordinate (cp) at (0,2) ;
      \draw[vert] (1.73,1) circle (1) ;
      \node at (1.73,1) {$\mathcal D_i$} ;
      \coordinate (cv) at (1.73,1) ;
      \node[dot] at (1.73/2,1.5) {} ;
      \node[dot] at (1.73/2,0.5) {} ;
      \node[dot] at (0,1) {} ;
      \node at (1.73/2,1.75) {$P$} ;
      \node at (1.73/2,0.25) {$Q$} ;
      \node at (-0.25,1) {$O$} ;
      \coordinate (P) at (1.73/2,1.5) ;
      \coordinate (Q) at (1.73/2,0.5) ;
      \coordinate (O) at (0,1) ;
      \draw[opacity=0.3] (O) -- (P) -- (Q) -- cycle ;
      \draw[opacity=0.3] (cm) -- (O) -- (cp) -- (P) -- (cv) -- (Q) -- cycle ;
    \end{tikzpicture}
    \caption{The layout of the disks $\mathcal D_i$, and the elliptical disks $\mathcal E^+_k$ and $\mathcal E^-_k$.}
    \label{fig:almost-disks}
  \end{figure}
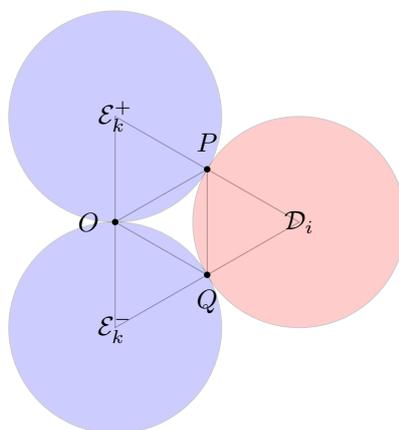

  The acute angle formed by $\ell_1$ and $\ell_m$ (incident in $O$) is made arbitrarily small so that, by continuity of the elliptical disk defined by two tangents at two points, the filled ellipses $\mathcal E^+_k$ and $\mathcal E^-_k$ have eccentricity arbitrarily close to 0 and major axis arbitrarily close to 1. 
  \end{proof}

  In the construction, we made \emph{both} the eccentricity of the (filled) ellipses arbitrarily close to 0 and the ratio between the largest and the smallest major axis arbitrarily close to 1.
  We know that this construction is very unlikely to work for the extreme case of unit disks, since a polynomial algorithm is known for \textsc{Max Clique}.
  Note that even with disks of arbitrary radii, Theorem~\ref{thm:main-structural-non-disk} unconditionally proves that the construction does fail.
  Indeed the co-2-subdivision of $C_3+C_3$ is the complement of $C_9+C_9$, hence not a disk graph.

\subsection{Homothets of a convex polygon}

Another natural direction of generalizing a result on disk intersection graphs is to consider {\em pseudodisk intersection graphs}, i.e., intersection graphs of collections of closed subsets of the plane (regions bounded by simple Jordan curves) that are pairwise in a {\em pseudodisk} relationship (see Kratochv\'il \cite{DBLP:conf/gd/Kratochvil96}). Two regions $A$ and $B$ are in pseudodisk relation if both differences $A\setminus B$ and $B\setminus A$ are arc-connected.
It is known that $P_{hom}$ graphs, i.e., intersection graphs of homothetic copies of a fixed polygon $P$, are pseudodisk intersection graphs~\cite{agarwal453state}. As shown by Brimkov {\em et al.}, for every convex $k$-gon $P$, a $P_{hom}$ graph with $n$ vertices has at most $n^k$ maximal cliques~\cite{DBLP:journals/corr/BrimkovJKKPRST14}. This clearly implies that \cli, but also \textsc{Clique $p$-Partition} for fixed $p$ is polynomially solvable in $P_{hom}$ graphs.
Actually, the bound on the maximum number of maximal cliques from \cite{DBLP:journals/corr/BrimkovJKKPRST14} holds for a more general class of graphs, called $k_{DIR}$-CONV, which  admit a intersection representation by convex polygons, whose every side is parallel to one of $k$ directions.

Moreover, we observe that Theorem \ref{thm:coEvenCycles} cannot be generalized to $P_{hom}$ graphs or $k_{DIR}$-CONV graphs. Indeed, consider the complement $\overline{P_n}$ of an $n$-vertex path $P_n$. The number of maximal cliques in $\overline{P_n}$, or, equivalently, maximal independent sets  in $P_n$ is $\Theta(c^n)$ for $c \approx 1.32$, i.e., exponential in $n$ \cite{DBLP:journals/jgt/Furedi87}. Therefore, for every fixed polygon $P$ (or for every fixed $k$) there is $n$, such that $\overline{P_n}$ is not a $P_{hom}$ ($k_{DIR}$-CONV) graph.

\section{Perspectives}\label{sec:perspectives}

We presented the first QPTAS and subexponential algorithm for \cli on disk graphs.
Our subexponential algorithm extends to the weighted case and yields a polynomial algorithm if both the degree $\Delta$ and the odd girth $c$ of the complement graph are constant.
Indeed, our full characterization of disk graphs with co-degree 2, implies a backdoor-to-bipartiteness of size $c\Delta$ in the complement.

We have also paved the way for a potential NP-hardness construction.
We showed why the versatile approach of representing complements of even subdivisions of graphs forming a class on which \mis is NP-hard fails if the class is \emph{general graphs}, \emph{planar graphs}, or even any class containing the disjoint union of two odd cycles.
This approach was used by Middendorf for some string graphs \cite{Middendorf92} (with the class of all graphs), Cabello et al. \cite{CabelloCL13} to settle the then long-standing open question of the complexity of \cli for segments (with the class of planar graphs), in Section~\ref{sec:gen&lim} of this paper for ellipses and triangles (with the class of all graphs).
Determining the complexity of \mis on graphs without two vertex-disjoint odd cycles as an induced subgraph is a valuable first step towards settling the complexity of \cli on disks.

Another direction is to try and strengthen our QPTAS in one of two ways: either to obtain a PTAS for \cli on disk graphs, or to obtain a QPTAS (or PTAS) for \wcli on disk graphs.
It is interesting to note that Bock et al. \cite{Bock14} showed a PTAS for \mwis for graphs $G$ with $\ocp(G) = O(\log n / \log \log n)$. However, this bound is too weak to use a win-win approach similar to Theorem \ref{thm:qptas}.

\bibliographystyle{abbrv}

\end{document}